\documentclass[11pt]{article}

\usepackage[margin=1in]{geometry}

% Use the postscript times font!
\usepackage{times}

\usepackage{soul}
\usepackage{url}
\usepackage[hidelinks]{hyperref}
\usepackage[utf8]{inputenc}
\usepackage[small]{caption}
\usepackage{graphicx,amsthm}
\usepackage{amsmath}
\usepackage{amsfonts}
\usepackage{booktabs}
\usepackage{color}
\urlstyle{same}
\usepackage{xfrac}
\usepackage{paralist}
\usepackage[ruled,linesnumbered,vlined]{algorithm2e}
\usepackage{authblk}

\newtheorem{claim}{Claim}[section]
\newtheorem{theorem}{Theorem}[section]

\newtheorem{lemma}[theorem]{Lemma}

\newtheorem{definition}[theorem]{Definition}
\newtheorem{invariant}{Invariant}[section]

\newcommand{\bfs}{\mathbf{s}} 
\newcommand{\bfv}{\mathbf{v}} 
\newcommand{\bfB}{\mathbf{B}}

\title{Approximately Envy-Free Budget-Feasible Allocation\thanks{The authors are ordered alphabetically. The authors thank Edith Elkind, Georgios Birmpas, Warut Suksompong, and Alexandros Voudouris for helpful discussions at the early stage of this work.}}

%\author{Jiarui Gan$^1$ \hspace{30pt} Bo Li$^2$ \hspace{30pt} Xiaowei Wu$^3$\\
%$^1$Max Planck Institute for Software Systems\\
%$^2$Department of Computing, The Hong Kong Polytechnic University\\
%$^3$IOTSC, University of Macau\\
%\texttt{jrgan@mpi-sws.org, comp-bo.li@polyu.edu.hk, xiaoweiwu@um.edu.mo}}

\author[1]{Jiarui Gan}
\author[2]{Bo Li}
\author[3]{Xiaowei Wu}
\affil[1]{Max Planck Institute for Software Systems}
\affil[2]{Department of Computing, The Hong Kong Polytechnic University}
\affil[3]{IOTSC, University of Macau}
\affil[ ]{\texttt{jrgan@mpi-sws.org, comp-bo.li@polyu.edu.hk, xiaoweiwu@um.edu.mo}}

\date{}

\begin{document}

\maketitle

\begin{abstract}
	In the budget-feasible allocation problem, a set of items with varied sizes and values are to be allocated to a group of agents. Each agent has a budget constraint on the total size of items she can receive.
	The goal is to compute a feasible allocation that is \emph{envy-free} (EF), in which the agents do not envy each other for the items they receive, nor do they envy a charity, who is endowed with all the unallocated items.
	Since EF allocations barely exist even without budget constraints, we are interested in the relaxed notion of {\em envy-freeness up to one item} (EF1).	
% --- Older version ---
%	In the recent work by Wu et al. (IJCAI 2021), it is shown that the budget-feasible allocation maximizing Nash Social Welfare (NSW) achieves an approximation of 1/4 for EF1.
%	However, since the computation of such allocations does not admit polynomial time algorithms, the computation of (approximately) EF1 allocation remains largely open.
%%	
%	In this paper, we take one step towards solving this problem by showing that for agents with identical additive valuations, a 1/2-approximate EF1 allocation can be computed in polynomial time.
%	We also propose efficient algorithms to compute an exact EF1 allocation for the uniform-budget case and the two-agent case.
%	Finally, we show that for identical additive valuations the approximation ratio of EF1 offered by an NSW maximizing allocation approaches 1 when the item sizes are infinitesimal compared with the budgets.
	The computation of both exact and approximate EF1 allocations remains largely open, despite a recent effort by Wu et al. (IJCAI 2021) in showing that any budget-feasible allocation that maximizes the Nash Social Welfare (NSW) is 1/4-approximate EF1.
	In this paper, we move one step forward by showing that for agents with identical additive valuations, a 1/2-approximate EF1 allocation can be computed in polynomial time.
	For the uniform-budget and two-agent cases, we  propose efficient algorithms for computing an exact EF1 allocation.
	We also consider the large budget setting, i.e., when the item sizes are infinitesimal compared with the agents' budgets, and show that both the NSW maximizing allocation and the allocation our polynomial-time algorithm computes have an approximation close to 1 regarding EF1.
\end{abstract}

\section{Introduction}

In the fair allocation problem of indivisible items, a set of items $M$ with different values are to be allocated to a group of $n$ agents $N$.
The goal is to compute an allocation that is \emph{envy-free} (EF), in which no agent finds that the bundle of items of any other agent is more valuable than her own.
Previous research mostly considers \emph{unconstrained} versions of this problem, in which any $n$-partitioning of the item set $M$ is accepted as a feasible allocation.
However, in many real world applications, there might be budget constraints that prevents an agent from taking any arbitrary bundle of items.
Consider a scenario where a company, i.e., a contractor, outsources a number of projects (items) to some subcontractors (agents). Each project comes with a profit (value) and a workload (size); a subcontractor invests the required workload to accomplish each project assigned to them and earns a profit.
Of course, each subcontractor wants to earn as much profit as they can, as long as the total workload is budget-feasible (i.e., it does not exceed their budget).
Meanwhile, to maintain long-term partnerships with the subcontractors, the company does not want any of them to feel less prioritized in project assignments, so fairness is a primary concern.

A natural model for such applications is recently proposed by Wu et al.~\cite{corr/WuLG20}.
%To this end, it is natural to consider the following budget-feasible setting of the fair allocation problem that has recently been proposed in~\cite{corr/WuLG20}.
In their model, each item $j\in M$ has a size $s_j$, and each agent $i\in N$ has a budget $B_i > 0$, which puts an upper bound on the total size of items this agent can receive.
An associated EF notion is also proposed.
By this EF notion, an agent $i$ does not envy an agent $j$ if and only if she cannot find a {\em subset} of agent $j$'s items that is more valuable than her own, while the size of this subset does not exceed her budget.
Indeed, it would be unfair if we ignore the agents' budgets and require an agent to receive as much value as another agent, even though her budget is much larger that of the other agent. 

%In the budget-feasible setting, in addition to the value $v_j$, each item $j\in M$ is also associated with a size $s_j$; each agent $i\in N$ has budget $B_i > 0$, which upper limits the total size of items agent $i$ receives.
%%
%Under the budget constraints, we say that agent $i$ does not envy $j$ if she cannot find a {\em subset} of agent $j$'s bundle which is more valuable than her own while the size of this bundle does not exceed her budget.
%%
%This EF notion incorporates budget constraints; indeed, it would be unfair if we ignore the agents' budgets and require an agent to receive as much value as another agent, even though her budget is much larger that of the other agent. 

The budget constraints may very often prevent the items from being all allocated (e.g., in one extreme where every agent's budget is even smaller than the smallest item). Hence, unlike many other fair allocation tasks, having all items allocated to the agents is not required in our setting. Nevertheless, without this requirement, a trivial EF solution would then be allocating nothing to every agent, which completely defeats our purpose of allocating the items.
For a more meaningful objective, a dummy agent called the {\em charity} is introduced, who is assumed to have a budget large enough for taking all the items, but attaches no value to any item.
With this dummy agent, there always exists a complete allocation, where all items are allocated.
Any complete EF allocation for this augmented agent set corresponds to an EF allocation in which the unallocated bundle is not envied by any (non-dummy) agent.

Since EF allocations barely exist in natural instances and are hard even to approximate, we are interested in the relaxed notion of {\em envy-freeness up to one item} (EF1) \cite{lipton2004approximately}, whereby envy is allowed but not for more than one item. 
It is shown in the aforementioned work of Wu et al. that any allocation which maximizes the Nash Social Welfare (NSW) is approximately EF1, with a tight approximation ratio of $1/4$.\footnote{This holds even in a more general setting in which an item may be of different values to different agents, as shown by Wu et al. 
%However, the approximation ratio of 1/4 is tight even if the valuations are identical.
}
%
%Their work leaves open a natural question:
An open question that follows this result is:

\begin{quote}
{\em Does there exist budget-feasible allocations with a better approximation ratio of EF1?}
%\jiarui{Does there exist any other class of budget-feasible allocations with a better approximation ratio (as approximately EF1 allocations)?}
\end{quote}

%In addition to their existence, it is also an interesting direction to study the computational complexity of such allocations.
%Observe that an NSW maximizing allocation per se is hard to compute even in the unconstrained setting~\cite{nguyen2014computational,lee2017apx}.
%%
%In other words, it remains unknown whether there exists polynomial time algorithm that computes (approximately) EF1 allocations that are budget-feasible.
%Consequently, we also consider the following open problem:

In addition, since an NSW maximizing allocation per se is hard to compute even in the unconstrained setting~\cite{nguyen2014computational,lee2017apx}, it remains unknown whether there exists an efficient algorithm that computes any approximately EF1 allocation.
Consequently, we are also interested in the following quesiton:

\begin{quote}
{\em Can we compute such allocations in polynomial time?}
%\jiarui{Can we design a polynomial-time algorithm to compute budget-feasible allocations that are approximately EF1, with a guaranteed approximation ratio?}
\end{quote}

As mentioned in \cite{corr/WuLG20}, the above questions appear to be non-trivial. 
In this paper, we answer both questions affirmatively.

\subsection{Main Results and Techniques}

%We show that for EF1 allocations, commonly used allocation algorithms can perform arbitrarily bad for our problem, even if they are modified by considering the values or densities (the ratio between value and size) of the items.
%One difficulty is that we cannot easily exchange the agents' bundles as we can in the unconstrained setting because they might have different budgets.
%We therefore introduce the concept of {\em virtual budgets}.
%We gradually increase the virtual budgets of agents in our algorithm to ensure that the budget constraints are not violated. Meanwhile, we are able to exchange bundles between agents who have the same virtual budget. Our algorithm runs in polynomial time and as one of our main contributions, we prove that it always generates a 1/2-approximate EF1 allocation.

We observe that with budget constraints, commonly used algorithms for computing EF1 allocations may perform arbitrarily bad, even with trivial modifications that takes into account  the values or densities (the ratio between the value and size) of the items.
One difficulty is that the knapsack problem (i.e., maximizing total value subject to budget constraint) does not have a succinct optimal solution and a small difference in budgets can make the optimal solutions significantly different. 
We therefore introduce the concepts of {\em feasible configuration} and  {\em virtual budget}.
Informally, a feasible configuration is a collection of item bundles that fit agents' budgets.
Our algorithm greedily assigns items with the largest density while ensuring that the resulting allocation remains a feasible configuration.
In addition, we gradually increase the virtual budgets and ensure that the virtual budget of an agent is at most her actual budget.
The virtual budgets enable us to exchange bundles between agents even if their actual budgets are different.
Our algorithm runs in polynomial time and generates a 1/2-approximate EF1 allocation.

Besides the approximability, we show that two special, yet typical, settings of our problem admit polynomial-time algorithms for computing an exact EF1 allocation.
In the setting where the agents have identical budgets, we show that our algorithm for computing a 1/2-approximate EF1 allocation in the general setting actually degenerates to an algorithm with an ``early stop'' feature, and it computes an exact EF1 allocation.
In the setting with two agents, our algorithm is based on the ``divide and choose'' approach, that is widely used in the fair division literature.

Finally, we consider the large budget setting, in which the agents' budgets are much larger than the sizes of items~\cite{esa/FeldmanHKMS10,ec/DevanurJSW11,icalp/MolinaroR12,stoc/KesselheimTRV14,corr/WuLG20}.
We show that under this setting, our polynomial-time algorithm computes an allocation whose approximation ratio of EF1 is close to $1$.
We also investigate the extent to which an NSW maximizing allocation approximates EF1 in the large budget setting. 
It has been shown that (for non-identical valuation functions) such an allocation is 1/2-approximate EF1~\cite{corr/WuLG20}.
We prove that when agents have identical valuations, this approximation ratio improves and approaches 1, which coincides with the result of Caragiannis et al.~\cite{caragiannis2019unreasonable} for the unconstrained setting.

\subsection{Related Work}

There is a growing research interest in fair division, especially since the relaxed notion of EF1 was introduced \cite{lipton2004approximately,budish2011combinatorial}. 
Our problem falls into the category of {\em constrained} fair division of {\em indivisible} items.
Existing works mainly consider the problem under matroid~\cite{conf/ijcai/BiswasB18,conf/aaai/BiswasB19} or cardinality constraints~\cite{corr/AzizHMS19,corr/HummelH21}.
While an EF1 allocation can be simply found by the round-robin algorithm in the unconstrained setting \cite{caragiannis2019unreasonable}, 
the problem becomes much more difficult when various constraints are taken into consideration. 
For example, with matroid constraints, it is shown that an EF1 allocation exists when agents have an identical valuation \cite{conf/ijcai/BiswasB18}. 
Later on, Biswas and Barman~\cite{conf/aaai/BiswasB19} provided a polynomial time algorithm to compute an EF1 allocation.
More recently, progress was also made on the setting with non-identical valuations~\cite{conf/aaai/DrorFS21,corr/WuLG20}.

The problem we consider in this paper can be viewed as a multi-agent version of the multiple knapsack problem (MKP).
The problem is a natural generalization of the classical NP-complete problem, the knapsack problem \cite{karp1972reducibility}.
In the MKP, a set of of items with varied sizes and values are to be packed in multiple knapsacks. Each knapsack has a budget that limits the total size of items it can take.
The goal is to find a way to pack the items so that the total value of packed items is maximized~\cite{ibarra1975fast,karmarkar1982efficient,martello1990knapsack,chekuri2005polynomial,kellerer2013knapsack}.
In other words, existing works study the MKP with respect to social welfare, while our work focuses on fairness.

The notions of fairness and budget/knapsack constraints also appear in voting scenarios \cite{conf/wine/FainGM16,conf/sigecom/ConitzerF017,conf/aaai/FluschnikSTW19}, where a set of voters vote for a set of costly items and the goal is to select a set of items within a fixed budget. Nevertheless, these problems are fundamentally different from ours.
There is only a single bundle to be selected and it is to be accessed by all the agents, whereas in our problem, we select a bundle for each agent, who does not have access to the other agents' bundles.

\section{Preliminaries}

In the budget-feasible fair allocation problem, a set $M$ of $m$ goods needs to be allocated to a set $N$ of $n$ agents.
%; each agent has a budget constraint on the items she receives.
Every item $j \in M$ has a value $v_j$ and a size $s_j$, and each agent $i \in N$ has a budget $B_i$, which restricts the total size of items she can receive.
We let $\bfs = (s_1, \dots, s_m)$, $\bfv = (v_1, \dots, v_m)$, and $\bfB = (B_1, \dots, B_n)$ be the size, value, and budget profiles, respectively; and we denote by $I = (\bfs, \bfv; \bfB)$ an instance of the problem.
For any subset $X\subseteq M$ of items, we let $s(X) = \sum_{j\in X}s_j$ and $v(X) = \sum_{j\in X}v_{j}$ be the total size and value of items in $X$, respectively.
We also denote by $\rho_{j} = v_{j}/s_j$ the {\em density} of each item $j\in M$, and by $\rho(X) = v(X)/s(X)$ the average density of items in a set $X\subseteq M$.
For notational simplicity, for any $X\subseteq M$ and $g\in M$, we write $X\cup\{g\}$ as $X+g$ and  $X\setminus\{g\}$ as $X - g$.

An allocation is an ordered $(n+1)$-partition of $M$, denoted as $\mathbf{X}=(X_0, X_1, \cdots, X_n)$, where each $X_i$ is the bundle of items allocated to agent $i \in N$ and $X_0$ contains the unallocated items.
The allocation must satisfy the budget constraints.
An allocation $\mathbf{X}$ is {\em budget-feasible} if $s(X_i) \le B_i$ for all $i \in N$.
For example, $(M, \emptyset, \cdots, \emptyset)$ is trivially a feasible allocation, in which every agent gets an empty bundle. 
Given an allocation $\mathbf{X}$, we say that agent $i$ is \emph{tight} if her remaining budget is insufficient for taking any unallocated item, i.e., $s(X_i+g) > B_i$ for all $g \in X_0$. We will also think of the unallocated items $X_0$ as endowment to a {\em charity} --- a special agent with an unlimited budget.
We let $N^+ = N \cup \{0\}$ be the set of agents including the charity.
Introducing the charity is important for our problem because the budget constraints disallow us to always allocate all the items to agents in $N$. 

We adapt the EF and EF1 notions to the above budget-feasible setting as follows.

\begin{definition}[$\alpha$-EF]
	For $0\le \alpha \le 1$, an allocation $\mathbf{X}$ is called {\em $\alpha$-approximate envy-free} or {\em $\alpha$-EF } 
	if for every pair of agents $i \in N$, $j \in N^+ - i$ and every $T\subseteq X_j$ with $s(T) \le B_i$,
	\begin{equation*}
		v(X_i) \ge \alpha \cdot v(T). 
	\end{equation*}
	When $\alpha = 1$, the allocation is also said to be EF.
\end{definition}

In other words, in an EF allocation, no agent $i$ finds a sub-bundle of another agent more valuable than her own bundle while this sub-bundle also fits with her budget.
The requirement that the agents do not envy the charity also excludes the allocation $(M, \emptyset, \cdots, \emptyset)$ from our consideration.
An $\alpha$-EF allocation may not exist for any $\alpha > 0$ even in the classical setting without budget constraints (e.g., when there are two agents but only one item to be allocated), so the next hope would be to find an EF1 allocation, which allows an agent to envy another but for at most one item. We define $\alpha$-EF1 below.

\begin{definition}[$\alpha$-EF1] \label{def:efk}
	For $0\le \alpha \le 1$, an allocation $\mathbf{X}$ is called {\em $\alpha$-approximate envy-free up to one item} or {\em $\alpha$-EF1}, if for every pair of agents $i \in N$, $j \in N^+ - i$, and every $T\subseteq X_j$ with $s(T) \le B_i$, there exists $e\in T$ such that
	\begin{equation*}
		v(X_i) \ge \alpha \cdot v(T- e).
	\end{equation*}
	When $\alpha = 1$, the allocation is also said to be EF1.
\end{definition}

\section{Warm Up} \label{sec:warm-up}

When there are no budget constraints, it is well-known that {\em round robin} is a simple and efficient procedure that always yields an EF1 allocation. 
In this procedure, the agents take turns to select a most valuable item from the unallocated items, until all items are selected.
It would then be tempting to think that this simple procedure can be easily adapted to our setting. 
There are two immediate difficulties.

First, the size of a selected item may exceed an agent's remaining budget, so this item cannot be allocated to this agent. A straightforward workaround is to restrict each agent's selection to items smaller than their remaining budget.
However, consider the instance shown in Table~\ref{tbl:hard-instance-1}.
A round-robin approach will first allocate items 1 and 2 to agents 1 and 2, respectively. 
After that, agent 1 becomes tight. If we stop the allocation procedure at this point, both agents would envy the charity for more than one item, whereas if we continue the procedure with agent 2 (who is not yet tight), agent 2 would get many more items than does agent 1, resulting in agent 1 envying her for more than one item.

\begin{table}[h]
	\begin{center}
		\begin{tabular}{c|cccc}
			\hline
			item & 1 & 2 & \dots & 100\\
			\hline
			value & $1$ & $0.5$ & $\dots$ & $0.5$ \\
			size & $1$ & $0.1$ & $\dots$ & $0.1$ \\
			\hline
		\end{tabular}
	\end{center}
	\caption{\label{tbl:hard-instance-1} A hard instance. There are two agents with budgets $B_1 = B_2 = 1$. Items 2 to 100 are identical.}
\end{table}%

It is not hard to see that the failure of the above attempt is mainly due to the inappropriate allocation of item 1, which has a big value but a small density.
This brings us to a more sophisticated greedy algorithm which takes into account densities of the items. 
We repeat the following steps until all the agents' bundles are finalized:
\begin{itemize}
	\item[1.]
	Pick an agent $i$ whose bundle has the lowest value among bundles that are not yet finalized.
	
	\item[2.]
	Pick the {\em densest} unallocated item that does not exceed the remaining budget of agent $i$ and allocate this item to $i$.
	If no such an item exists, finalize the bundle of agent $i$ and continue with the remaining agents.
\end{itemize}

The above algorithm tries to balance between values and densities. 
%it uses density as the criterion for selecting an item in Step 2, while the absolute values are considered in Step 1 through prioritizing the agent with the least valuable bundle so far.
Unfortunately, it does not get us any closer to EF1.
On the instance shown in Table~\ref{tbl:hard-instance-2}, it fails to generate an $\alpha$-EF1 allocation for any $\alpha > 0$.
Indeed, it generates an allocation with $X_1=\{1,3\}$ and $X_2=\{2\}$, whereby we have
\[
v(X_2) = 2\epsilon \le \frac{2\epsilon} {1-\epsilon} \cdot v(X_1 - e) 
\]
for any $e\in X_1$. The coefficient ${2\epsilon} /(1-\epsilon)$ can be made arbitrarily small by choosing an $\epsilon$ sufficiently close to $0$.

\begin{table}[h]
	\begin{center}
		\begin{tabular}{c|ccc}
			\hline
			items & 1 & 2 & 3 \\
			\hline
			values & 1 & $2\epsilon$ & $1-\epsilon$ \\
			sizes & $\epsilon$ & $2\epsilon$ & $1-\epsilon$ \\
			\hline
		\end{tabular}
	\end{center}
	\caption{\label{tbl:hard-instance-2} There are two agents with budgets $B_1 = B_2 = 1$.}
\end{table}%

\section{Computing a 1/2-EF1 Allocation}

\label{sec:1/2-EF1}

In this section, we present an efficient algorithm for computing a $1/2$-EF1 allocation.
Without loss of generality, we assume that the agents are indexed in ascending order of their budgets, i.e., $B_1\leq B_2\leq \ldots\leq B_n$.
We also assume that there is always a dummy item of value $0$ and size larger than $B_n$ in $M$, so that there is always at least one unallocated item in $X_0$.
The pseudo-code of our main algorithm is presented in Algorithm~\ref{alg:compute-EF1}.

\begin{algorithm}[t]
	\small
	\caption{Computing a $1/2$-EF1 allocation.\label{alg:compute-EF1}}
	{\bf Input:} An instance $I = (\bfs, \bfv; \bfB)$\\
	
	\smallskip
	
	Initialize $X_i \leftarrow \emptyset$ for each $i\in N$, and $X_0 \leftarrow M$;\\
	Initialize level $l(i) \leftarrow 1$ for each $i\in N$;\label{ln:set-level}\\
	Initialize active agents: $\mathcal{A}\leftarrow N$; %\tcp{active agents}
	
	\smallskip
	
	\While{$\mathcal{A} \neq \emptyset$\label{ln:while-outer}}{
		Pick an arbitrary agent $i\in \mathcal{A}$ with the minimum $v(X_i)$; \label{ln:pick-min-Xi} \\ %\footnotemark\\
		$U\leftarrow X_0$;\\
		\While{$U \neq \emptyset$\label{ln:while-inner}}
		{
			Pick an arbitrary item $g\in U$ with maximum density;\\
			\uIf{$\mathsf{TryFit}(X_i,g)$ is successuful \label{ln:call-tryfit}}
			{
				Commit the operations in $\mathsf{TryFit}(X_i,g)$ and
				break out of this while loop (Lines~\ref{ln:while-inner}--\ref{ln:while-inner-end});\label{ln:commit-tryfit}\\
			}
			\Else
			{
				$U \leftarrow U - g$;\label{ln:while-inner-end}\\
			}
		}
		
		\If{$U = \emptyset$}
		{
			Let $j$ be the largest index such that $l(j) = l(i)$;\\
			Swap $i$ and $j$, and items in $X_i$ and $X_j$;\\
			%$\mathcal{A} \leftarrow \{j \in \mathcal{A} : l(j) > l(i)\}$;\label{ln:finalize}\\
			$\mathcal{A} \leftarrow \{i+1, i+2, \dots, n\}$;\label{ln:finalize}\\
		}
		
		\label{ln:while-outer-end}
	}
	
	\smallskip
	
	{\bf Output:} allocation $\mathbf{X} = (X_1, \dots, X_n)$.
\end{algorithm}

\begin{algorithm}[t]
	\caption{$\mathsf{TryFit}(X_i, g)$ \label{alg:try-fit}} \small	
	
	$\ell \leftarrow i$; \\
	\smallskip
	
	\While{$s(X_\ell + g) > B_{l(\ell)}$\label{ln:while-size-larger}}
	{
		Let $j$ be the largest index such that $l(j) = l(\ell)$;\\
		\uIf{$j \neq \ell$}
		{
			Swap items in $X_\ell$ and $X_j$\label{ln:tryfit-swap};\\
			$\ell \leftarrow j$;\label{ln:tryfit:i-j}
		}
		\uElseIf{$l(\ell) < \ell$\label{ln:li-lt-i}}
		{
			$l(\ell) \leftarrow l(\ell)+1$\label{ln:tryfit-levelup};
		}
		\Else{\textbf{Output:} {
				\em unsuccessful} (and terminate);
		}
	}
	
	$X_\ell \leftarrow X_\ell + g$; \label{ln:add-g}\\
	
	\textbf{Output:} {\em successful}.\\
\end{algorithm}

\subsection{Virtual budget and $\mathsf{TryFit}$ Subroutine}

At a high level, Algorithm~\ref{alg:compute-EF1} uses a similar greedy approach:
it repeatedly picks an agent with the least valuable bundle among all bundles not yet finalized (i.e., among active agents in $\mathcal{A}$), and attempts to allocate a densest item to this agent.
The algorithm sets a \emph{virtual budget} for each agent, which is maintained through a level value $l(i)$ for each $i$; the virtual budget of agent $i$ is then $B_{l(i)}$; 
Initially, the virtual budget is $B_1$ for all the agents (Line~\ref{ln:set-level}).
The algorithm maintains the following two invariants throughout.

\begin{invariant}[Level invariant]
	For all $i,j \in N$, if $i < j$, then $l(i)\leq l(j) \le j$. %{\color{red} I added $\le j$}
\end{invariant}

\begin{invariant}[Size invariant]
	For all $i\in N$, $s(X_i) \in \left( B_{l(i)-1},B_{l(i)} \right]$ (where we let $B_0 = -\infty$).
\end{invariant}

At Line~\ref{ln:call-tryfit}, Algorithm~\ref{alg:compute-EF1} calls a subroutine $\mathsf{TryFit}$. 
The goal of this subroutine is to check whether, after item $g$ is added to $X_i$, there is a way to reallocate the bundles $X_1,\dots, X_n$ so that the budget constraints are satisfied.
We call a collection of bundles that admits such a reallocation a {\em feasible configuration}; see Definition~\ref{def:fsb-conf}.
In $\mathsf{TryFit}$, the size of the new bundle $X_i$ is compared with the virtual budget of agent $i$. If the virtual budget is insufficient for packing $X_i$, the subroutine tries to adjust the agent's virtual budget to the next level.
The fulfillment of the goal of $\mathsf{TryFit}$ relies on the two invariants defined above, which are maintained by $\mathsf{TryFit}$ in the meantime; see Lemmas~\ref{lmm:invariants} and \ref{lemma:feasible-config-tryfit}.
If $\mathsf{TryFit}(X_i, g)$ is unsuccessful for all unallocated items $g \in X_0$, then we finalize bundle $X_i$ and mark agent $i$ as inactive. 
We say that this agent $i$ is {\em actively finalized}. In the meantime we also finalize and deactivate all agents $j$ with $l(j) \le l(i)$. 
We say that each of these agents $j$ is {\em (passively) finalized}.

\begin{definition}[Feasible configuration]
	\label{def:fsb-conf}
	A collection of bundles $\mathbf{Y} = \{Y_1,\ldots,Y_n\}$ is called {\em a feasible configuration} if there exists a permutation $\sigma:N \to N$, such that $\left(Y_0, Y_{\sigma(1)}, Y_{\sigma(2)}, \dots, Y_{\sigma(n)}\right)$ is a feasible allocation,
	where $Y_0 = M \setminus \bigcup_{i=1}^n \overline{Y}_i$.
\end{definition}

\begin{lemma}
	\label{lmm:invariants}
	The level and size invariants are maintained after the operations in $\mathsf{TryFit}$ are committed (Line~\ref{ln:commit-tryfit}, Algorithm~\ref{alg:compute-EF1}).
\end{lemma}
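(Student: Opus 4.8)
The plan is to prove both invariants by induction over the primitive operations that a single successful call $\mathsf{TryFit}(X_i,g)$ performs, taking as the induction base the fact (supplied by the outer loop of Algorithm~\ref{alg:compute-EF1}) that both invariants hold when the call is entered. Reading Algorithm~\ref{alg:try-fit}, these operations are of exactly three kinds: (i) a \emph{swap} of $X_\ell$ with $X_j$, where $j$ is the largest index with $l(j)=l(\ell)$ (so the two bundles sit at a common level); (ii) a \emph{level-increment} $l(\ell)\leftarrow l(\ell)+1$, which fires only when $\ell$ is already the largest index of its level and $l(\ell)<\ell$; and (iii) the single final \emph{insertion} $X_\ell\leftarrow X_\ell+g$. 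It is convenient to follow the bundle that starts in $X_i$ --- call it the \emph{carrier} --- since the swaps merely relocate it to ever larger indices while keeping its contents (hence its size $s(X_i)$) fixed, and $g$ is ultimately added to it. Along the run the index $\ell$ is non-decreasing and so is its virtual budget $B_{l(\ell)}$.

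I would dispatch the level invariant first, as it is the easier half. Swaps and the insertion leave every $l(\cdot)$ untouched, so only a level-increment can endanger it. Such an increment acts on the largest index $\ell$ of its current level $k:=l(\ell)$ under the guard $k<\ell$. Assuming the level invariant holds just before the increment, every $i'<\ell$ has $l(i')\le k$ and every $j>\ell$ has $l(j)\ge k+1$ (the latter because $\ell$ is the \emph{largest} index at level $k$, so no larger index shares it). Raising $l(\ell)$ to $k+1$ therefore keeps the sequence non-decreasing, and the guard $k<\ell$ gives $k+1\le\ell$, so $l(\ell)\le\ell$ is preserved as well.

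The size invariant is the crux, because it is genuinely violated in mid-run: incrementing $l(\ell)$ enlarges the virtual budget while the carrier keeps its old size $s(X_i)$, which is now too small for the new level. The fix is to induct on a one-sided strengthening: \emph{at every moment the size invariant holds at all indices except possibly the carrier's current index $\ell$, and there the only admissible failure is $s(X_\ell)\le B_{l(\ell)-1}$ (too small, never too large).} This one-sidedness is automatic, since once the carrier's level exceeds its starting level $k_0:=l(i)$ we have $s(X_i)\le B_{k_0}\le B_{l(\ell)-1}$ by monotonicity of $B_1\le\cdots\le B_n$. A swap, being between two equal-level bundles, simply transports this lone (possible) violation from index $\ell$ up to index $j$ while restoring index $\ell$ to a correctly sized bundle; a level-increment creates or sustains exactly the permitted too-small violation at $\ell$. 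Thus just before the final insertion the unique possibly-violated index is the terminal carrier index $\ell^\ast$.

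It then remains to check that inserting $g$ restores the invariant at $\ell^\ast$, i.e.\ that $s(X_i+g)\in\bigl(B_{l(\ell^\ast)-1},\,B_{l(\ell^\ast)}\bigr]$. The upper bound is exactly the loop-exit condition $s(X_{\ell^\ast}+g)\le B_{l(\ell^\ast)}$. For the lower bound I would split on whether any increment occurred: if none did, then $\ell^\ast=i$ and $l(\ell^\ast)=k_0$, and $s(X_i)>B_{k_0-1}$ from the incoming size invariant gives $s(X_i+g)>B_{k_0-1}$ since adding $g$ only increases size; if at least one increment occurred, the last one raised the level to $k^\ast:=l(\ell^\ast)$ precisely because its guard $s(X_i+g)>B_{k^\ast-1}$ held, and subsequent swaps (if any) do not change the carrier's size. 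Either way $s(X_i+g)$ lands in the required interval, so the size invariant is fully restored at commit time. The main obstacle, and where I would spend the care, is exactly this size-invariant argument: one must isolate the single, one-directional violation carried by the relocating bundle and then nail down the interval membership of the final size $s(X_i+g)$.
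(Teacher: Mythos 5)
Your proof is correct and follows essentially the same route as the paper's: the level invariant is dispatched by noting only increments touch levels and the guard $l(\ell)<\ell$ bounds them, while the size invariant is tracked by observing that only the relocating carrier bundle can be (one-sidedly) out of range mid-run, with the final interval membership $s(X_\ell+g)\in(B_{l(\ell)-1},B_{l(\ell)}]$ coming from the loop-exit condition and the unit-step increment guard. Your version merely makes explicit (via the strengthened ``carrier-only, too-small-only'' induction hypothesis and the case split on whether an increment occurred) what the paper's terser argument leaves implicit.
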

\begin{proof}
	First, the level invariant holds before $\mathsf{TryFit}$ is called the first time, when we have $l(j) = 1 \le j$ for all $j\in N$. Since we only increase the level of the agent with largest index in this level and $l(j) \le j$ due to the condition at Line~\ref{ln:li-lt-i}, the level invariant is maintained throughout.
	
	The size invariant holds before $\mathsf{TryFit}$ is called the first time, when we have $X_i = \emptyset$ for all $i\in N$.
	Suppose that the size invariant is maintained at some point before we execute the operations of $\mathsf{TryFit}$; we argue that it is also maintained afterwards.
	Indeed, throughout the while loop in $\mathsf{TryFit}$, $X_\ell$ is the only bundle that may violate the size invariant as the levels of all other bundles do not change. 
	Since $l(\ell)$ increases by at most $1$ in each iteration of the while loop,
	when the while loop breaks out, we have $B_{l(\ell)-1} < s(X_\ell + g) \le B_{l(\ell)}$.
	The addition of $g$ into $X_i$ at Line~\ref{ln:add-g} then gives $B_{l(\ell)-1} < s(X_\ell) \le B_{l(\ell)}$, so the size invariant is also maintained for bundle $X_\ell$.
\end{proof}

\begin{lemma}
	\label{lemma:feasible-config-tryfit}
	$\mathsf{TryFit}(X_i, g)$ is successful if and only if $\{X_1,\ldots, X_i+g, \dots,X_n\}$ is a feasible configuration.
\end{lemma}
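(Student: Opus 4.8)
The plan is to prove the two directions of the equivalence separately, using the level and size invariants as the bridge between the local steps of $\mathsf{TryFit}$ and the global feasibility condition of Definition~\ref{def:fsb-conf}. Throughout I would rely on the fact that the invariants hold at the \emph{start} of every $\mathsf{TryFit}$ call: they hold trivially at initialization, and by Lemma~\ref{lmm:invariants} they are restored after each commit. It is also worth recording the standard characterization of feasible configurations, of which only the necessity half is needed: since $B_1 \le \cdots \le B_n$, any feasible assignment must send each bundle of size $> B_k$ to an agent whose budget exceeds $B_k$, so at most $n-k$ bundles can have size larger than $B_k$.

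For the forward direction ($\Rightarrow$) I would simply invoke Lemma~\ref{lmm:invariants}: once $\mathsf{TryFit}(X_i,g)$ succeeds and its operations are committed, the size invariant gives $s(X_k)\le B_{l(k)}$ and the level invariant gives $l(k)\le k$, hence $s(X_k)\le B_{l(k)}\le B_k$ for every $k\in N$. Thus the committed state $(X_0,X_1,\dots,X_n)$ is itself a feasible allocation. Since every swap inside $\mathsf{TryFit}$ only permutes bundle contents among agents, and the tracking variable $\ell$ always follows the original contents of $X_i$ (so $g$ is ultimately added to the bundle that began as $X_i$), the multiset of committed bundles is exactly $\{X_1,\dots,X_i+g,\dots,X_n\}$. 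A feasible allocation whose bundles form a permutation of this collection is, by definition, a witness that the collection is a feasible configuration.

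For the reverse direction I would argue the contrapositive: if $\mathsf{TryFit}(X_i,g)$ fails, the collection is not a feasible configuration. Let $\ell^\ast$ be the value of $\ell$ at failure. Failure occurs only in the final \textbf{else} branch, where $j=\ell^\ast$ (so $\ell^\ast$ is the largest index at its level) and $l(\ell^\ast)<\ell^\ast$ is false; with the level invariant $l(\ell^\ast)\le\ell^\ast$ this forces $l(\ell^\ast)=\ell^\ast$, and the loop guard yields $s(X_{\ell^\ast}+g)>B_{l(\ell^\ast)}=B_{\ell^\ast}$. The key structural claim is that positions $\ell^\ast+1,\dots,n$ are never disturbed during this call: the variable $\ell$ is non-decreasing (swaps move it to a larger index $j$, and level-ups keep it fixed) and ends at $\ell^\ast$, so $\ell\le\ell^\ast$ at all times; since every swap sets $\ell$ to its partner $j$, every swap partner satisfies $j\le\ell^\ast$. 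Hence the contents and levels of positions above $\ell^\ast$ are unchanged from the start of the call, so they still obey the size invariant, and for each $k>\ell^\ast$ the fact that $\ell^\ast$ is the largest index at level $\ell^\ast$ gives $l(k)\ge\ell^\ast+1$, whence $s(X_k)>B_{l(k)-1}\ge B_{\ell^\ast}$.

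Combining these, the $n-\ell^\ast+1$ bundles $X_{\ell^\ast}+g,\,X_{\ell^\ast+1},\dots,X_n$ all have size strictly greater than $B_{\ell^\ast}$, yet only the $n-\ell^\ast$ agents in $\{\ell^\ast+1,\dots,n\}$ can have budget exceeding $B_{\ell^\ast}$; by pigeonhole no permutation can place all of them feasibly, so $\{X_1,\dots,X_i+g,\dots,X_n\}$ is not a feasible configuration. I expect the main obstacle to be precisely the bookkeeping of this localization step in the reverse direction --- establishing that the suffix of positions above $\ell^\ast$ is untouched by the failing call and still satisfies the size invariant at the moment of failure, so that those bundle sizes genuinely exceed $B_{\ell^\ast}$. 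Once that is pinned down, both the pigeonhole obstruction and the forward direction are routine.
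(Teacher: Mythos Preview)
Your proposal is correct and follows essentially the same approach as the paper: the forward direction via Lemma~\ref{lmm:invariants} and the observation that $\mathsf{TryFit}$ only permutes the bundles, and the reverse direction via the contrapositive pigeonhole count of bundles with size exceeding $B_{\ell^\ast}$. Your treatment of the reverse direction is in fact more careful than the paper's --- you explicitly justify that positions above $\ell^\ast$ are untouched (via the monotonicity of $\ell$) and that $l(k)\ge \ell^\ast+1$ strictly for $k>\ell^\ast$, both of which the paper uses tacitly --- but this is a matter of rigor rather than a different route.
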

\begin{proof}
	Suppose that $\mathsf{TryFit}(X_i, g)$ is successful and, for each $j\in N$, let $X'_j$ denote the value of $X_i$ after the execution of operations in $\mathsf{TryFit}$.
	By Lemma~\ref{lmm:invariants}, $s(X'_j) \le B_{l(j)} \le B_j$ for each $j \in N$, so $(X'_1, \dots, X'_n)$ is a feasible allocation. 
	Indeed, $\mathsf{TryFit}$ only permutes the order of the bundles $X_1,\ldots, X_i+g, \dots,X_n$, so the collection of these bundles is a feasible configuration.	
	Conversely, suppose that the configuration $\{X_1,\ldots, X_i+g, \dots,X_n\}$ is feasible. We show that $\mathsf{TryFit}(X_i, g)$ must be successful.
	Note that $\mathsf{TryFit}(X_i, g)$ is unsuccessful only if the following conditions hold at some iteration of the while loop:
	\begin{itemize}
		\item $s(X_\ell + g) > B_\ell$;
		\item $\ell$ is the agent with the largest index at level $\ell$; and 
		\item $l(\ell) = \ell$.
	\end{itemize}
	Nevertheless, if these conditions hold, then the number of bundles in $\{X_1,\ldots, X_i+g, \dots,X_n\}$ of sizes larger than $B_\ell$ is at least $n-\ell+1$ given that: $s(X_\ell + g) > B_\ell$, and $l(j) \ge l(\ell)$ for all $j > \ell$ by the level invariant, which implies $s(X_j) > B_\ell$ by the size invariant.
	This requires $n-\ell+1$ agents with budget at least $B_\ell$ while we only have $n-\ell$ such agents, so the configuration is not a feasible one, which contradicts the assumption.	
\end{proof}

\subsection{Approximation Ratio Analysis}

Now we are ready to show the following key theorem about the approximation guarantee by Algorithm~\ref{alg:compute-EF1}.

\begin{theorem} \label{thm:compute-EF1}
	Algorithm~\ref{alg:compute-EF1} computes a 1/2-EF1 allocation in polynomial time.
\end{theorem}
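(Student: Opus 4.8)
The plan is to establish two things about the output of Algorithm~\ref{alg:compute-EF1}: that it is a valid (budget-feasible) allocation, and that it satisfies the $1/2$-EF1 condition; and separately to argue polynomial running time. Feasibility is the easy part and follows almost immediately from the machinery already developed: Lemma~\ref{lmm:invariants} guarantees the size invariant $s(X_i)\le B_{l(i)}\le B_i$ holds throughout (using the level invariant $l(i)\le i$ and the ascending ordering of budgets), so every committed configuration is budget-feasible. The substance of the theorem is the envy bound, so I would spend most of the proof there.

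\medskip

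For the $1/2$-EF1 guarantee, I would fix an agent $i\in N$ and a target $j\in N^+-i$, together with a sub-bundle $T\subseteq X_j$ with $s(T)\le B_i$, and exhibit an item $e\in T$ with $v(X_i)\ge \tfrac12\,v(T-e)$. The natural strategy is to exploit the greedy structure: because the algorithm always augments the active agent of \emph{least} value, and always picks the \emph{densest} feasible item, I expect that at the moment agent $i$'s bundle is finalized, its value is comparable to what any other agent could have accumulated within a budget of $B_i$. I would split the analysis by the type of the target $j$. When $j$ is a genuine agent, I would track the moment $i$ is finalized and compare $v(X_i)$ against $v(X_j)$ at that time, using the "pick the minimum-value active agent" rule together with the level/finalization bookkeeping (an agent is finalized only when no unallocated item fits, either actively or passively). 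When $j=0$ is the charity, the key point is that $X_0$ consists of items that could not be fit into $X_i$ when $i$ was finalized, and I would use the density-greedy choice to bound the value one can pack into budget $B_i$ from $T\subseteq X_0$.

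\medskip

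The crux — and where I expect the factor $1/2$ to emerge — is a \emph{knapsack-style} argument. Since $s(T)\le B_i$ but the items in $T$ were rejected or unreachable for $i$, I would argue that $T$ decomposes so that removing a single, carefully chosen item $e$ (the one whose addition first overshoots the relevant budget threshold, or the single "large-value, low-density" culprit illustrated by Table~\ref{tbl:hard-instance-2}) leaves a remainder $T-e$ whose value is controlled by density times budget. Concretely, I would aim to show $v(T-e)\le \rho^\ast\cdot B_i$ for the densest relevant density $\rho^\ast$, while simultaneously showing $v(X_i)\ge \tfrac12\,\rho^\ast\cdot B_i$ by combining the size invariant (so $s(X_i)$ is close to $B_{l(i)}$, hence not too small relative to $B_i$) with the density-greedy order (so the items actually in $X_i$ have high density). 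The main obstacle is precisely this bookkeeping across virtual-budget levels: the bundle $X_i$ may live at a virtual budget $B_{l(i)}<B_i$, so I must argue that the value accumulated at the lower virtual budget still dominates half of what is achievable at the true budget $B_i$, accounting for the single EF1 exception item. I would isolate this as the key lemma and verify the constant $1/2$ there; once it is in hand, the envy comparisons for both the agent-target and charity-target cases should follow by routine substitution.

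\medskip

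Finally, for the running time, I would observe that the outer while loop deactivates at least one agent each time $U=\emptyset$ is reached, so it runs $O(n)$ phases; within each phase, scanning candidate items and each $\mathsf{TryFit}$ call involve only $O(n)$ swaps and level increments (levels only increase and are bounded by $n$), and the densest-item selection costs $O(m)$ per attempt. This yields a polynomial bound of roughly $O(\mathrm{poly}(n,m))$, which I would state without belaboring the exact exponent.
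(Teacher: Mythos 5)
Your feasibility and running-time arguments are fine (the paper counts iterations slightly differently --- each outer iteration either removes an item from $X_0$ or shrinks $\mathcal{A}$ --- but your accounting also yields a polynomial bound). The gap is in the crux you identify for the factor $1/2$. You propose to show $v(T-e)\le \rho^\ast\cdot B_i$ for the densest relevant density $\rho^\ast$ and, in parallel, $v(X_i)\ge \tfrac12\,\rho^\ast\cdot B_i$. The second inequality is false: the size invariant only places $s(X_i)$ above $B_{l(i)-1}$, and the virtual budget $B_{l(i)}$ can be far below $B_i$, so $v(X_i)$ need not be within a constant factor of $\rho^\ast B_i$. The paper's own tightness instance (Table~\ref{tbl:tight-approx}) refutes it: there $X_1=\{1,3\}$ has $v(X_1)=1+\epsilon$, while the densest item of $T=\{2,5,6\}$ has density $1/\epsilon$, so $\tfrac12\,\rho^\ast B_1\approx 1/\epsilon\gg v(X_1)$. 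More generally, a pure density-times-budget estimate cannot control $T$ when $T$ contains a valuable low-density item (the phenomenon of Table~\ref{tbl:hard-instance-2}) that agent $i$ never had the chance to take.

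What is missing are the two ingredients the paper actually uses. First, an inductive invariant (Claim~\ref{clm:ef1-agents-A}) that the \emph{active} bundles remain exactly EF1 with respect to one another throughout the run; proving this requires tracking the bundle permutations performed by $\mathsf{TryFit}$, which your ``level/finalization bookkeeping'' does not engage with. Second, for the hard case where $j$'s bundle is finalized after $i$ is actively finalized, a decomposition $T=T_1\cup T_2$ with $T_1=T\cap X_{j'}$ (the part $j$ already held at that moment) and $T_2$ the items added later: one shows $v(\overline{X}_i)\ge v(T_1)$ via the invariant together with $s(X_{j'})>B_i$, and separately $v(\overline{X}_i)\ge v(T_2-e)$ via the density-greedy overshoot argument you sketch. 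The factor $2$ then arises from \emph{adding} these two bounds, $2\,v(\overline{X}_i)\ge v(T_1)+v(T_2-e)=v(T-e)$, not from comparing $v(X_i)$ to $\rho^\ast B_i$. (Against the charity the paper in fact obtains exact EF1, not merely $1/2$, by the density argument alone.) Without the invariant and the decomposition, your plan does not close.
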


Indeed, by the level and size invariants, we have $s(X_j) \le B_{l(j)} \le B_j$, so Algorithm~\ref{alg:compute-EF1} always computes a feasible allocation.
In addition, it is not hard to see that Algorithm~\ref{alg:compute-EF1} finishes in polynomial time as in every iteration of the while loop, either one item is removed from $X_0$ or at least one agent is removed from the active set $\mathcal{A}$.
In the sequel, we present the following two parts to complete the proof of Theorem~\ref{thm:compute-EF1}:
(i) 1/2-EF1 is guaranteed between the agents (Lemma~\ref{lmm:ef1-agents}); and
(ii) no agent envies the charity by more than one item (Lemma~\ref{lmm:ef1-agent-charity}). 
%(iii) Algorithm~\ref{alg:compute-EF1}  runs in polynomial time.
For ease of description, in what follows we denote by $(\overline{X}_1, \dots, \overline{X}_n)$ the output of Algorithm~\ref{alg:compute-EF1} (and $\overline{X}_0 = M \setminus \bigcup_{i=1}^n \overline{X}_i$) to distinguish with the bundles $X_0, X_1, \dots, X_n$ not yet finalized during the execution of the algorithm.
%{\color{red} Maybe it's easier if we use $(\overline{X}_1, \dots, \overline{X}_n)$ to denote the output of Algorithm~\ref{alg:compute-EF1}... It is quite confusing in the proofs below when we also talk about $X_i$ during the algorithm... I rewrote Lemma~\ref{lmm:ef1-agent-charity} in this way (the previous version is in the comment)}

\begin{lemma}
	\label{lmm:ef1-active-finalize}
	Suppose that agent $k$ is finalized in some iteration where agent $i$ is actively finalized.
	Then $B_k \le B_i$ and $v(\overline{X}_k) \ge v(\overline{X}_i)$.
\end{lemma}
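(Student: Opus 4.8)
The plan is to unwind precisely what the finalize branch of Algorithm~\ref{alg:compute-EF1} does in the iteration that actively finalizes agent $i$, and then read off both claims from the budget ordering and the minimum-value selection rule, with essentially no computation. First I would record a structural fact about the active set: since $\mathcal{A}$ is only ever reset to a suffix of the form $\{i+1,\dots,n\}$ at Line~\ref{ln:finalize}, it is always a suffix $\{t,t+1,\dots,n\}$. Hence at the start of the iteration in question the active agents are exactly $\{t,\dots,n\}$, and the agents finalized in this iteration are precisely those removed from $\mathcal{A}$ during it.

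Next I would trace the finalize branch itself. The minimum-value active agent is picked at some index $p\ge t$; every call to $\mathsf{TryFit}$ fails, and since the operations of a failed $\mathsf{TryFit}$ are not committed (only a successful call triggers the commit at Line~\ref{ln:commit-tryfit}), the bundles are unchanged from the moment $p$ was selected. The algorithm then swaps $p$ with $j$, the largest index at level $l(p)$, and sets $\mathcal{A}\leftarrow\{j+1,\dots,n\}$. By the level invariant, levels are nondecreasing in index, so $p\le j$, every index in $\{t,\dots,j\}$ lies at level at most $l(j)=l(p)$, and every index above $j$ lies at a strictly higher level; thus the batch finalized in this iteration is exactly $\{t,\dots,j\}$, with the actively finalized bundle now placed at index $j$. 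In the lemma's labeling this top index $j$ is the agent denoted $i$, and every passively finalized agent $k$ sits at an index in $\{t,\dots,j-1\}$, i.e. $k<i$.

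From here both inequalities are immediate. For $B_k\le B_i$: the budgets are indexed in nondecreasing order and this ordering is preserved by the algorithm (only bundles and levels move, never the budget attached to an index), so $k<i$ gives $B_k\le B_i$. For $v(\overline{X}_k)\ge v(\overline{X}_i)$: the bundle finalized at index $i$ is exactly the one chosen for having minimum value among the active agents, and the relocation swap only exchanges the bundles at indices $p$ and $j$, leaving $v(\overline{X}_i)=v(X_p)$. Each passively finalized agent $k$ was active at selection time, so its bundle had value at least this minimum; and its value is unchanged in the iteration (if $k\neq p$ its bundle is untouched, and if $k=p$ it receives the bundle formerly at $j$, whose value is also at least the minimum). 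Hence $v(\overline{X}_k)\ge v(X_p)=v(\overline{X}_i)$.

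The main obstacle I anticipate is the bookkeeping forced by the index relabeling at the swap: I must argue carefully that the actively finalized bundle truly ends up at the top index $j$ of its level block and that the passively finalized agents are exactly the strictly smaller indices of the finalized batch, so that ``budget monotone in index'' and ``minimum value at selection'' align with the final labels. The two facts that make this go through, $p\le j$ and the identification of $\{t,\dots,j\}$ with the active agents at level at most $l(p)$, are both consequences of the level invariant; once they are pinned down, nothing further is needed.
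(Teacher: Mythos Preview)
Your proposal is correct and follows essentially the same approach as the paper's proof: both arguments boil down to (i) the minimum-value selection at Line~\ref{ln:pick-min-Xi} forces $v(\overline{X}_k)\ge v(\overline{X}_i)$, and (ii) the reset $\mathcal{A}\leftarrow\{i+1,\dots,n\}$ at Line~\ref{ln:finalize} forces $k\le i$ and hence $B_k\le B_i$. Your treatment is simply more explicit about the index swap and the suffix structure of $\mathcal{A}$, which the paper leaves implicit; no genuinely different idea is involved.
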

\begin{proof}
	We have $v(\overline{X}_k) = v(X_k)$ and $v(\overline{X}_i) = v(X_i)$, where $X_k$ and $X_i$ denote the bundles in the iteration where $\overline{X}_k$ and $\overline{X}_i$ are finalized.
	According to Line~\ref{ln:pick-min-Xi}, $X_i$ has the minimum value among all active bundles, so we have $v(X_k) \ge v(X_i)$ and hence, $v(\overline{X}_k) \ge v(\overline{X}_i)$. In addition, according to Line~\ref{ln:finalize}, all agents $\ell > i$ remain active, so we have $k \le i$ and hence, $B_k \le B_i$.
\end{proof}

\begin{lemma}
	\label{lmm:ef1-agents}
	For any pair of agents $i,j\in N$ and any $T\subseteq \overline{X}_j$ such that $s(T)\leq B_i$, there exists an item $g\in T$ such that $v(\overline{X}_i)\geq \frac{1}{2} \cdot v(T-g)$.
\end{lemma}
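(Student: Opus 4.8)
The plan is to establish the EF1 guarantee between any two agents $i$ and $j$ by relating it to the moment in the algorithm when agent $i$'s bundle is finalized. The key structural fact I would exploit is that the algorithm greedily assigns the densest feasible items, so each bundle $\overline{X}_j$ is built up from high-density items. The main quantity to control is $v(T)$ for a sub-bundle $T \subseteq \overline{X}_j$ with $s(T) \le B_i$; since $T$ fits within agent $i$'s budget, I want to argue that at the time agent $i$ was finalized, agent $i$ had the \emph{opportunity} to acquire value comparable to $v(T)$.

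**First I would** split into cases according to whether agent $j$ was finalized before or after agent $i$, or equivalently compare their finalization iterations. The cleanest case is when $\overline{X}_i$ is finalized no earlier than $\overline{X}_j$: by the selection rule at Line~\ref{ln:pick-min-Xi} and Lemma~\ref{lmm:ef1-active-finalize}-style reasoning, agent $i$'s final value is at least the value of any bundle finalized earlier, so $v(\overline{X}_i) \ge v(\overline{X}_j) \ge v(T)$ and EF1 holds trivially (indeed exact EF). The substantive case is when $\overline{X}_i$ is finalized \emph{strictly before} some of the items in $T$ were added to $X_j$. Here I would use the fact that agent $i$ was actively finalized because $\mathsf{TryFit}(X_i, g)$ failed for every unallocated $g \in X_0$, meaning no remaining item could be fit into $i$'s bundle as a feasible configuration. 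I then want to account for the value $v(T)$ using the densities: I would order the items of $T$ by density and argue that all but (at most) the single most valuable item of $T$ could, in aggregate, have been accommodated by agent $i$ within budget $B_i$, yielding $v(\overline{X}_i) \ge \tfrac{1}{2} v(T - g)$ for the appropriate $g$.

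**The hard part will be** handling the interaction between the virtual-budget mechanism and the density-greedy order. Because $\mathsf{TryFit}$ may reshuffle bundles across agents via swaps and level increments, the bundle $\overline{X}_j$ that finally belongs to agent $j$ is not monotonically built up by agent $j$ alone — items may have been placed while that bundle was "owned" by a different agent at a lower level. I would therefore need to track value and size through these swaps, relying on the invariants (Invariant~1 and Invariant~2) to certify that whatever item was densest-and-feasible at each step dominates, in density, the items of $T$ that agent $i$ failed to obtain. Concretely, the obstacle is showing that the density of items agent $i$ \emph{did} collect is at least as large as the density of the items in $T$ that were allocated after $i$'s finalization; this is what converts a size comparison ($s(T) \le B_i$) into a value comparison and forces the factor $\tfrac{1}{2}$ (one item's worth of slack being lost to the "up to one item" relaxation).

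**Finally I would** assemble the bound: writing $T = \{e\} \cup (T - e)$ where $e$ is the single item I am allowed to drop, I expect $v(T - e)$ to decompose into a part already dominated by $v(\overline{X}_i)$ (items agent $i$ could have taken) plus a part bounded by $v(\overline{X}_i)$ via the density argument, giving $v(T - e) \le 2\, v(\overline{X}_i)$. The factor of two arises precisely because, after removing the one exceptional item, the remaining value of $T$ is split between what $i$ provably matches and what the greedy density ordering lets us charge back to $i$'s own bundle. The role of Lemma~\ref{lmm:ef1-active-finalize} is to ensure that in the symmetric sub-case the comparison is even stronger, so the $\tfrac{1}{2}$ is only needed in the asymmetric case I have isolated above.
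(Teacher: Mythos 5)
Your overall route is the paper's: split on which of the two bundles is finalized first, reduce the late-finalized case to the moment agent $i$ is \emph{actively} finalized (invoking Lemma~\ref{lmm:ef1-active-finalize} for passively finalized agents), and then bound $v(T-e)$ by two copies of $v(\overline{X}_i)$ --- one copy for the portion of $T$ already sitting in $j$'s bundle at that moment, one copy obtained from a density comparison against the items added afterwards. That is exactly the paper's decomposition $T = T_1 \cup T_2$ with $T_1 = T \cap X_{j'}$, and your diagnosis of where the factor $2$ comes from is right.

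There is, however, one concretely false step in your easy case. You assert that if $\overline{X}_i$ is finalized no earlier than $\overline{X}_j$ then ``agent $i$'s final value is at least the value of any bundle finalized earlier,'' i.e.\ $v(\overline{X}_i)\ge v(\overline{X}_j)$, so that exact EF holds. This does not follow from the selection rule at Line~\ref{ln:pick-min-Xi}: only the \emph{actively} finalized agent in an iteration is guaranteed to have minimum value among active bundles; a passively finalized bundle $\overline{X}_j$ (deactivated because $l(j)\le l(i)$) can be worth strictly more than every still-active bundle, and an active bundle need not catch up before it is itself finalized. What the algorithm actually maintains --- and what the paper isolates as Claim~\ref{clm:ef1-agents-A} --- is only the EF1 relation $v(X_{i'})\ge v(X_{j''}-g)$ among \emph{active} bundles, proved by induction through the permutations that $\mathsf{TryFit}$ applies. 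That weaker invariant still closes your easy case (it gives $v(\overline{X}_i)\ge v(\overline{X}_j-g)\ge v(T-g)$), so the gap is repairable, but you should state and prove that invariant explicitly rather than the false EF version; its inductive maintenance across the bundle swaps is precisely the bookkeeping you defer to ``the hard part,'' and it is also needed (as inequality~\eqref{eq:ef1-agents-eq1}) to launch the $T_1$ half of your final bound.
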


\begin{proof}
	We first show that the following claim holds throughout Algorithm~\ref{alg:compute-EF1}.
	
	\begin{claim}
		\label{clm:ef1-agents-A}
		For any $i,j\in \mathcal{A}$ it holds that $v(X_i) \ge v(X_j - g)$ for some $g\in X_j$. 
	\end{claim}
	\begin{proof}
		Loosely speaking, the active bundles are EF1 among the active agents.
		Indeed, this claim holds trivially when the while loop begins, where we have $X_i = \emptyset$ for all $i\in \mathcal{A}$.
		It then suffices to show that the claim holds at the end of an iteration as long as it holds at the beginning of this iteration.
		Suppose that Claim~\ref{clm:ef1-agents-A} holds at the beginning of an iteration.
		If $U=\emptyset$ in this iteration, then no item is allocated while $\mathcal{A}$ is updated to a subset of itself at the end of this iteration, so Claim~\ref{clm:ef1-agents-A} holds.
		If $U\neq \emptyset$, then an item $g$ is allocated to bundle $X_i$ by the committed operations of $\mathsf{TryFit}$.
		Suppose that the $\mathsf{TryFit}$ results in the index of each bundle $j$ being changed to $\sigma(j)$. We denote the bundle right after the execution of $\mathsf{TryFit}$ as $X'_j$.
		Namely, we have $X_j = X'_{\sigma(j)}$ for all $j\neq i$, and $X_i + g = X'_{\sigma(i)}$.
		%By assumption, for every $j, \ell \in \mathcal{A}$, we have $v(X_i) \ge v(X_j - e)$ for some $e \in X_j$. Then if $i \notin \{j, \ell\}$, we have $$v(X'_i) \ge v(X_j - g) $$
		For every $j_1, j_2 \in \mathcal{A}$, $j_1 \neq j_2$: 
		\begin{itemize}
			\item
			If $j_2 \neq \sigma(i)$, we have 
			$$v( X'_{j_1}) \ge v(X_{\sigma^{-1}(j_1)}) \ge v(X_{\sigma^{-1}(j_2)}-e) = v(X'_{j_2} - e)$$
			for some $e \in X_{\sigma^{-1}(j_2)} = X_{j_2}$, where the second transition holds according to our assumption that Claim~\ref{clm:ef1-agents-A} holds when this iteration begins. 
			
			\item
			If $j_2 = \sigma(i)$, then $j_1 \neq \sigma(i)$ and we have 
			$$v( X'_{j_1}) = v(X_{\sigma^{-1}(j_1)}) \ge v(X_{i}) = v(X'_{j_2} - g),$$
			where $v(X_{\sigma^{-1}(j)}) \ge v(X_{i})$ because $X_i$ has the minimum value according to Line~\ref{ln:pick-min-Xi}.
		\end{itemize}
		
		Thus, Claim~\ref{clm:ef1-agents-A} holds for the new bundles $X'_j$ at the end of the iteration.
	\end{proof}

	The claim immediately implies that if bundle $\overline{X}_j$ is finalized before or in the same iteration with bundle $\overline{X}_i$, then it holds that
	$$v(\overline{X}_i) \ge v(X_{i'}) \ge v(X_j - g) = v(\overline{X}_j - g)$$
	for some $g\in X_j$, where we look at the moment right before $\overline{X}_j$ is finalized at Line~\ref{ln:finalize}, and $i'$ is the index of bundle $\overline{X}_i$ in that iteration.
	Thus, for any $T\subseteq \overline{X}_j$, we have  
	$$v(\overline{X}_i) \ge v(\overline{X}_j - g) \ge v(T - g) \ge \frac{1}{2} \cdot v(T-g),$$
	as desired. % {\color{red} Ask!}
	
	On the other hand, if $\overline{X}_j$ is finalized after $\overline{X}_i$, consider the moment right before $\overline{X}_i$ is finalized at Line~\ref{ln:finalize}.
	By Lemma~\ref{lmm:ef1-active-finalize}, it suffices to consider the situation when $\overline{X}_i$ is {\em actively} finalized.
	We have
	\begin{equation}
		\label{eq:ef1-agents-eq1}
		v(\overline{X}_i) = v(X_i) \ge v(X_{j'} - g)
	\end{equation}
	for some $g \in X_{j'} \subseteq \overline{X}_j$, where $j'$ is the index of bundle $\overline{X}_j$ at that moment.
	Observe that the way Algorithm~\ref{alg:compute-EF1} proceeds means that:
	\begin{itemize}
		\item[(i)]
		The items in $\overline{X}_j \setminus X_{j'}$, which are included into $\overline{X}_j$ in the subsequent iterations, cannot have a higher density than any item in $X_i$. In addition, since $\overline{X}_i$ is actively finalized, $s(X_i + e) > B_i$ for all $e \in \overline{X}_j \setminus X_{j'}$ as otherwise at least one more item could be included in $X_i$ and it would not be finalized in the iteration we consider.
		
		\item[(ii)]
		$s(X_{j'}) > B_i$, which is due to the size invariant and the fact that $l(j') > l(i)$ by Line~\ref{ln:finalize}.
	\end{itemize}
	
	Pick arbitrary $T \subseteq \overline{X}_j$ such that $s(T) \le B_i$.
	\begin{itemize}
		\item
		If $g\in T$, then we immediately have
		$$v(\overline{X}_i) \ge v(X_{j'} - g) \ge v(T - g) \ge \frac{1}{2} \cdot v(T - g),$$
		where we use the inequality $v(T) \le v(X_{j'})$ which is due to the fact that $X_{j'}$ contains the densest items in $\overline{X}_j$ and $s(X_{j'}) > B_i \ge s(T)$.
		
		\item
		If $g\notin T$, then let $T_1 = T \cap X_{j'}$ and $T_2 = T \setminus T_1$ (so $T_2 \subseteq \overline{X}_j \setminus X_{j'}$). 
		Using \eqref{eq:ef1-agents-eq1}, we have 
		$$v(\overline{X}_i) \ge v(X_{j'} - g) \ge v(T_1).$$ 
		In addition, since items in $X_i$ are as dense as any item in $T_2$ and $s(X_i + e) > B_i$ for all $e \in T_2$, we have
		\begin{align*}
			v(\overline{X}_i) = v(X_i) 
			& \ge s({X}_i)  \cdot \rho(T_2 - e)
			> (B_i - s(e) ) \cdot \rho(T_2 -e) \\
			& \ge s(T_2 - e) \cdot \rho(T_2 - e) = v(T_2 - e).
		\end{align*}
		Combining the above two inequalities gives the desired inequality as well:
		$$2 v(\overline{X}_i) \ge v(T_1) + v(T_2 - e) = v(T - e).$$
	\end{itemize}
	
	This completes the proof.
\end{proof}

To prove Lemma~\ref{lmm:ef1-agent-charity}, we use the following useful result.

\begin{lemma} \label{lem:greedy:charity}
	Suppose $X$ and $Y$ are two sets of items with $s(X) \le B$ and $s(Y) \le B$.
	For every item $g \in Y$, let $W_g = \{j \in X : \rho_j \ge \rho_g\}$ be the set of items in $X$ that are at least as dense as $g$.
	If $s(W_g + g) > B$ for all $g \in Y$, there exists an item $j \in Y$ such that $v(Y - j) \le v(X)$.
\end{lemma}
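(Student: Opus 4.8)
My plan is to reformulate the conclusion. Exhibiting $j\in Y$ with $v(Y-j)\le v(X)$ is the same as showing $v(X)\ge v(Y)-\max_{g\in Y}v_g$, and I will establish this by an explicit choice of $j$, namely a \emph{densest} item of $Y$. So let $g^\star\in Y$ maximize density, write $\rho^\star=\rho_{g^\star}=\max_{g\in Y}\rho_g$, and aim to prove $v(Y-g^\star)\le v(X)$.

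The key point is that I only need to invoke the hypothesis for this single item $g^\star$. Since $W_{g^\star}=\{j\in X:\rho_j\ge\rho^\star\}\subseteq X$ and $s(W_{g^\star}+g^\star)>B$, the set $W_{g^\star}$ carries size $s(W_{g^\star})>B-s_{g^\star}$, entirely at density at least $\rho^\star$; hence
\[
v(X)\ \ge\ v(W_{g^\star})\ \ge\ \rho^\star\, s(W_{g^\star})\ >\ \rho^\star\,(B-s_{g^\star}).
\]
On the other side, because $\rho^\star$ dominates every density in $Y$, I get $v(Y)=\sum_{g\in Y}\rho_g s_g\le\rho^\star s(Y)\le\rho^\star B$, using $s(Y)\le B$. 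Subtracting the two bounds yields $v(Y)-v(X)<\rho^\star B-\rho^\star(B-s_{g^\star})=\rho^\star s_{g^\star}=v_{g^\star}$, that is, $v(Y-g^\star)=v(Y)-v_{g^\star}<v(X)$, which is the desired inequality (with room to spare).

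I expect the main obstacle to be resisting the more ``obvious'' but doomed approach. It is tempting to try a layer-cake / density-profile argument, showing pointwise that the super-level sizes satisfy $s^{\ge\rho}(Y-j)\le s^{\ge\rho}(X)$ for every $\rho$ and then integrating to get $v(Y-j)\le v(X)$. This fails: the two profiles genuinely cross, and applying the hypothesis level-by-level loses a slack equal to the size of a \emph{different} $Y$-item at each density, and these slacks can accumulate to more than any single $v_g$. The insight that rescues the proof is that a \emph{single} use of the hypothesis, at the densest item of $Y$, already suffices, precisely because $\rho^\star$ upper-bounds all of $Y$'s densities. Note also that $g^\star$ need not be the maximum-\emph{value} item; I do not optimize the choice, I simply show that the densest item works, and the feasibility $s_{g^\star}\le s(Y)\le B$ keeps $B-s_{g^\star}\ge 0$ throughout.
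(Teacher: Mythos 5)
Your proof is correct and takes essentially the same route as the paper's: both pick the densest item $g^\star$ of $Y$, invoke the hypothesis only for that one item to get $s(W_{g^\star}) > B - s_{g^\star} \ge s(Y-g^\star)$, and then compare densities to conclude $v(X) \ge v(W_{g^\star}) > v(Y-g^\star)$. The only cosmetic difference is that you route both bounds through the single density $\rho^\star$ (bounding $v(Y)$ above by $\rho^\star B$ and $v(X)$ below by $\rho^\star(B-s_{g^\star})$), whereas the paper compares the average densities $\rho(W_{g^\star}) \ge \rho(Y-g^\star)$ directly.
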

\begin{proof}
	Let $g^*$ be the densest item in $Y$, i.e., $\rho_{g^*} \ge \rho_g$ for all $g \in Y$.
	By assumption of the lemma we have $s(W_{g^*} + g^*) > B$, which implies
	\begin{equation*}
		s(Y - g^*) \leq B - s_{g^*} < s(W_{g^*}).
	\end{equation*}
	
	Note that $W_{g^*} \neq \emptyset$ since otherwise $s(W_{g^*} + g^*) = s_{g^*} \le s(Y) \le B$, which contradicts the assumption of this lemma.
	Since $g^*$ is the densest item in $Y$, by definition any item in $W_{g^*}$ is at least as dense as $g^*$ and any other items in $Y$:
	\begin{equation*}
		\rho(W_{g^*}) = \frac{v(W_{g^*})}{s(W_{g^*})} \ge \frac{v(Y - g^*)}{s(Y - g^*)} = \rho(Y-g^*).
	\end{equation*}
	
	It follows that 
	\begin{align*}
		v(X) \ge v(W_{g^*}) \ge s(W_{g^*}) \cdot \frac{v(Y - g^*)}{s(Y - g^*)} > v(Y - g^*). & \qedhere
	\end{align*}
\end{proof}

\begin{lemma}
	\label{lmm:ef1-agent-charity}
	For all $j\in N$ and $T\subseteq \overline{X}_0$ with $s(T)\leq B_j$, there exists $g\in T$ such that $v(\overline{X}_j)\geq v(T-g)$.
\end{lemma}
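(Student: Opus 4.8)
The plan is to turn this full (ratio-$1$) envy-freeness-against-the-charity statement into the purely ``greedy'' condition handled by Lemma~\ref{lem:greedy:charity}. Fix an agent $j \in N$ and a set $T \subseteq \overline{X}_0$ with $s(T) \le B_j$, and apply Lemma~\ref{lem:greedy:charity} with $X = \overline{X}_j$, $Y = T$, and $B = B_j$. The two size hypotheses hold immediately: $s(T) \le B_j$ by assumption, and $s(\overline{X}_j) \le B_{l(j)} \le B_j$ by the size and level invariants. The lemma would then deliver an item $g \in T$ with $v(T-g) \le v(\overline{X}_j)$, which is exactly the claim, provided we verify its hypothesis, namely the blocking condition
\[
s(W_g + g) > B_j \quad \text{for every } g \in T, \qquad W_g = \{e \in \overline{X}_j : \rho_e \ge \rho_g\}.
\]
So the entire proof reduces to establishing this blocking condition for each unallocated item $g$.

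To prove the blocking condition I would exploit that $g$ is never allocated. Consider the iteration where $\overline{X}_j$ is finalized; since $g \in \overline{X}_0 \subseteq X_0$ at that moment and (for the actively finalized agent) $\mathsf{TryFit}(\cdot, g)$ failed, Lemma~\ref{lemma:feasible-config-tryfit} says that inserting $g$ into $\overline{X}_j$ yields an \emph{infeasible} configuration. Because the frozen bundle $\overline{X}_j$ never changes afterwards while the remaining bundles only gain items, and infeasibility of a collection of bundles is monotone under enlarging bundle sizes, this infeasibility persists all the way to the final configuration. I would then combine this with the greedy densest-first rule---which guarantees that every item of $\overline{X}_j$ at least as dense as $g$ (that is, all of $W_g$) was committed before $g$ was ever passed over---and with the size and level invariants that pin down $s(\overline{X}_j) \in (B_{l(j)-1}, B_{l(j)}]$, to sharpen the configuration-level infeasibility into the density-restricted pointwise bound $s(W_g + g) > B_j$.

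The main obstacle is precisely this last sharpening: passing from ``$g$ cannot be inserted'' to $s(W_g + g) > B_j$. Three features of Algorithm~\ref{alg:compute-EF1} make it delicate. First, bundles are permuted by swaps, both inside $\mathsf{TryFit}$ and at finalization, so one must verify that the dense items $W_g$ that block $g$ actually reside in the \emph{final} bundle $\overline{X}_j$ and are not scattered across other bundles. Second, the algorithm runs on virtual budgets $B_{l(j)} \le B_j$, so one has to bridge back to the actual budget $B_j$ in the statement; the level invariant is what licenses this. Third, $\overline{X}_j$ may contain items strictly less dense than $g$, and the inequality must survive after these are discarded from $W_g$---this is exactly where the greedy commitment order is essential, since a lighter item can only enter a bundle after the denser candidates (including a potential slot for $g$) have already been tried and failed. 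Finally, I would need to extend the finalization-time argument from agents that are \emph{actively} finalized to those only \emph{passively} finalized in the same level group, for which the failure of $\mathsf{TryFit}(\cdot, g)$ is not recorded directly and must instead be inherited from the active agent of the group together with the ordering of budgets guaranteed by Lemma~\ref{lmm:ef1-active-finalize}.
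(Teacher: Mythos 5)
Your overall strategy is the paper's: reduce the statement to Lemma~\ref{lem:greedy:charity} with $X=\overline{X}_j$, $Y=T$, $B=B_j$; certify the blocking condition $s(W_g+g)>B_j$ through the failure of $\mathsf{TryFit}$ and Lemma~\ref{lemma:feasible-config-tryfit}; and handle passively finalized agents via Lemma~\ref{lmm:ef1-active-finalize}. The gap is in the one step you explicitly defer. The route you sketch---propagate the infeasibility of the configuration with $g$ inserted forward to the \emph{final} configuration---yields only $s(\overline{X}_j+g)>B_j$ (if $s(\overline{X}_j+g)\le B_j$ the identity permutation already certifies feasibility of the final configuration with $g$ inserted). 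That bound is strictly weaker than what Lemma~\ref{lem:greedy:charity} requires, because $\overline{X}_j$ may contain items strictly less dense than $g$ that lie outside $W_g$: the algorithm can reject $g$ in some iteration and then admit a less dense item into the same bundle later. You correctly name this as the main obstacle, but the final-configuration detour cannot close it, and no argument is given that does.

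The paper's derivation runs in the opposite temporal direction. Fix a moment at which a call $\mathsf{TryFit}(X_{j'},g)$ fails, where $j'$ is the then-current index of the bundle that becomes $\overline{X}_j$. At that moment every item already in $X_{j'}$ is at least as dense as $g$ (a less dense item could only have entered after $g$ had already been tried and rejected for this bundle), so $X_{j'}\subseteq W_g$. The size bound is then read off at that same moment: if $s(X_{j'}+g)\le B_j$, then substituting $X_{j'}+g$ for $\overline{X}_j$ in the final feasible allocation shows that the intermediate configuration $\{X_1,\dots,X_{j'}+g,\dots,X_n\}$ is feasible (each current bundle is a subset of some agent's final bundle), contradicting the failure via Lemma~\ref{lemma:feasible-config-tryfit}. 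Hence $s(W_g+g)\ge s(X_{j'}+g)>B_j$. Note that the feasibility of the \emph{final} allocation is used as a certificate for the \emph{intermediate} configuration---the reverse of your forward propagation. With this substitution your outline coincides with the paper's proof; the remaining pieces (the reduction to Lemma~\ref{lem:greedy:charity}, the split between actively and passively finalized agents, and the use of $B_j\le B_i$ together with $v(\overline{X}_j)\ge v(\overline{X}_i)$ for the passive case) are exactly as in the paper.
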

\begin{proof}
	Pick an arbitrary $e \in T$ and let $W_e = \{ e'\in \overline{X}_j : \rho_{e'} \geq \rho_e \}$ be the set of denser items in $\overline{X}_j$. We show that $s(W_e + e) > B_j$, so applying Lemma~\ref{lem:greedy:charity} concludes the proof.
	
	Suppose that at some iteration $\mathsf{TryFit}(X_{j'}, e)$ is called and $j'$ is the index of $\overline{X}_j$ in that iteration, i.e., $X_{j'} \subseteq \overline{X}_j$.
	Since $e$ is not added into $\overline{X}_j$, $\mathsf{TryFit}(X_{j'}, e)$ is not successful. 
	It must be that $s(X_{j'} + e) > B_j$: otherwise, $(\overline{X}_1, \dots, \overline{X}_{j-1}, {X_{j'} + e}, \overline{X}_{j+1}, \dots, \overline{X}_n)$ is a feasible allocation as the $i$-th bundle in this allocation has size at most $B_i$ for all $i\in N$, which means that $\{ X_1,\dots,X_{j-1},X_{j'}+e, X_{j+1}, \dots, X_n\}$ is a feasible configuration, contradicting the fact that $\mathsf{TryFit}(X_{j'}, e)$ is not successful according to Lemma~\ref{lemma:feasible-config-tryfit}.
	The way Algorithm~\ref{alg:compute-EF1} proceeds ensures that an item added earlier into $\overline{X}_j$ is at least as dense as the ones added later on. Hence, $\rho(e') \ge \rho(e)$ for all $e' \in X_{j'}$, which means $X_{j'} \subseteq W_e$ and $s(W_e + e) \ge s(X_{j'} + e) > B_j$. Thus, Lemma~\ref{lem:greedy:charity} implies that there exists $g\in T$ such that $v(\overline{X}_j)\geq v(T-g)$.
	
	It remains to consider the case when $\mathsf{TryFit}(X_{j'}, e)$ is not called throughout the algorithm. 
	Consider the iteration where $\overline{X}_j$ is finalized and according to Lemma~\ref{lmm:ef1-active-finalize}, it suffices to consider the situation where $\overline{X}_j$ is {\em actively} finalized.
	%Let $i$ be the agent with the minimum $v(X_i)$ that is selected at Line~\ref{ln:pick-min-Xi} and let $j^*$ be the agent with the largest index such that $l(j^*) = l(i)$.
	This means that $\mathsf{TryFit}(X_i, e)$ is not successful. 
	%Hence, in the $\mathsf{TryFit}$ procedure, $X_i$ will first be swapped with $X_{j^*}$, and by Lines~\ref{ln:li-lt-i} and \ref{ln:tryfit-levelup}, we will have $l(i) = j^*$ at some point while $s(X_i + e) > B_{j^*}$ (so the while loop does not break out). 
	Hence, $s(X_i + e) > B_{i}$ and the same argument above implies that, for any $T\subseteq \overline{X}_0$ such that $s(T)\leq B_{i}$, there exists $g\in T$ such that $v(\overline{X}_i)\geq v(T-g)$. This completes the proof.
\end{proof}

\subsection{Tightness of the Analysis}

Our analysis is tight due to the instance presented in Table~\ref{tbl:tight-approx}, for which Algorithm~\ref{alg:compute-EF1} computes an allocation whose approximation ratio of EF1 is at most $1/2$.
It can be verified that when $\epsilon$ is sufficiently small, running Algorithm~\ref{alg:compute-EF1} on this instance results in $X_1 = \{1,3\}$ and $X_2=\{2,4,5,6\}$.
%with the following allocation order: item 1 $\to X_1$; item 2 $\to X_2$; item 3 $\to X_1$; item 4 $\to X_2$ (and $l(2) \leftarrow2$); bundle $X_1$ finalized; items 5 and 6 $\to X_2$.
However, $s(\{2,5,6\}) = 2-\epsilon \le B_1$ while even after removing the most valuable item, the remaining value of the bundle approaches $2 \cdot v(X_1)$ when $\epsilon$ goes to $0$. 

\begin{table}[h]
	\begin{center}
		\begin{tabular}{c|cccccc}
			\hline
			items & 1 & 2 & 3 & 4 & 5 & 6  \\
			\hline
			values & $\epsilon$  & $1$ & $1$ & $2-\epsilon$ & $1-3\epsilon$ & $1-3\epsilon$ \\
			sizes & $\epsilon^3$ & $\epsilon$  & 1  & $2$  & $1-\epsilon$ & $1-\epsilon$   \\
			\hline
		\end{tabular}
	\end{center}
	\label{default}
	\caption{There are two agents with $B_1 = 2-\epsilon$ and $B_2 = 100$. The items are ordered in descending order of their densities.\label{tbl:tight-approx}}
\end{table}%

\section{EF1 Allocations in Two Special Settings} \label{sec:Special}

We show that we can efficiently compute an EF1 allocation in two special settings.

\subsection{Identical Budget}

The windfall of Algorithm~\ref{alg:compute-EF1} is that it generates an exact EF1 allocation when the agents have the same budget.
In fact, the algorithm degenerates to a greedy algorithm with an early stop  presented as Algorithm~\ref{alg:uniform-budget}:
it terminates immediately when an agent with the least valuable bundle gets tight.

\begin{algorithm}[h]
	\small
	\caption{EF1 allocation for uniform-budget.\label{alg:uniform-budget}}
	\text{{\bf Input:} An instance $I = (\bfs, \bfv; \mathbf{C})$ with $B_i = C$ for all $i\in N$.}\\
	
	\smallskip
	
	$X_i \leftarrow \emptyset$ for each $i\in N$, and $X_0 \leftarrow M$; %\\
	$\mathcal{A}\leftarrow N$; 
	
	\smallskip
	
	\While{$\mathcal{A} \neq \emptyset$}
	{
		\text{Pick an (arbitrary) agent $i\in \mathcal{A}$ with the minimum $v(X_i)$;}\\
		$U \leftarrow \{ g \in X_0 : s(X_i + g) \le C \}$;\\
		
		\uIf{$U \neq \emptyset$}
		{
			Allocate a (arbitrary) densest item $g^* \in U$ to $i$:
			$X_i \leftarrow X_i +g^*$ and $X_0 \leftarrow X_0 - g^*$\\
		}
		\lElse{go to output}		
	}
	
	\smallskip
	
	{\bf Output} allocation $\mathbf{X} = (X_1, \cdots, X_n)$.
\end{algorithm}

\begin{theorem} 
	\label{lemma:idenV+idenB}
	Algorithm~\ref{alg:uniform-budget} computes an EF1 allocation in polynomial time when all agents have the same budget.
\end{theorem}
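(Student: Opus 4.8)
The plan is to establish three things: budget-feasibility, polynomial running time, and exact EF1. Feasibility is immediate, since the set $U$ considered at each step only contains items $g$ with $s(X_i+g)\le C$, so no bundle ever exceeds $C$; and the algorithm runs in polynomial time because each iteration either allocates one item (at most $m$ of these) or exits to output. The substance is exact EF1, which I would split, as in Section~\ref{sec:1/2-EF1}, into envy \emph{between agents} (the analogue of Lemma~\ref{lmm:ef1-agents}) and envy \emph{toward the charity} (the analogue of Lemma~\ref{lmm:ef1-agent-charity}). The reason the approximation sharpens from $1/2$ to $1$ is structural: with a common budget $C$ the level/virtual-budget machinery of Algorithm~\ref{alg:compute-EF1} is vacuous, so $\mathsf{TryFit}(X_i,g)$ succeeds exactly when $s(X_i+g)\le C$, and, crucially, all agents are finalized simultaneously at the early stop. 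Hence the ``finalized-later'' case that forced the factor of $2$ in Lemma~\ref{lmm:ef1-agents} never arises.

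For envy between agents I would carry through the invariant of Claim~\ref{clm:ef1-agents-A}, but now with no loss. Concretely, I would show by induction on the allocation steps that at every moment $v(X_i)\ge v(X_j-g)$ for some $g\in X_j$, for every pair $i,j$. The only bundle changed in a step is that of the chosen minimum-value agent $i$, to which the densest fitting item $g^*$ is added: removing $g^*$ restores its previous value, which was minimum, so no one envies $i$ by more than $g^*$, while $i$ itself only gains value. Because the early stop finalizes all bundles at once, this invariant holds verbatim for the output $\overline{\mathbf{X}}$, and a one-line monotonicity argument extends it from $\overline{X}_j$ to any $T\subseteq\overline{X}_j$ with $s(T)\le C$, giving exact EF1 between agents.

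For envy toward the charity I would route all the envy through the minimum-value agent. Let $i^*$ be the agent chosen in the final iteration, so that $v(\overline{X}_{i^*})\le v(\overline{X}_j)$ for all $j$ and $i^*$ is tight, i.e. $s(\overline{X}_{i^*}+h)>C$ for every $h\in\overline{X}_0$. The aim is to apply Lemma~\ref{lem:greedy:charity} with $X=\overline{X}_{i^*}$, $Y=T$, and $B=C$, which requires $s(W_g+g)>C$ for every $g\in T$, where $W_g=\{e\in\overline{X}_{i^*}:\rho_e\ge\rho_g\}$. I would first prove an auxiliary fact: each agent receives items in non-increasing order of density. This holds because if a denser available item were passed over for a less dense one it must have failed to fit, and since an agent's bundle size only grows it can never fit afterwards either. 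Given this, at the moment $i^*$ has accumulated exactly the items of density at least $\rho_g$ (namely $W_g$), the item $g$ is still unallocated and available; since $i^*$ neither took $g$ nor any item at least as dense (all of which already lie in $W_g$), the item $g$ must not have fit, i.e. $s(W_g+g)>C$. Lemma~\ref{lem:greedy:charity} then yields some $g\in T$ with $v(T-g)\le v(\overline{X}_{i^*})\le v(\overline{X}_j)$, which is exactly charity-EF1 for every agent $j$.

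The main obstacle is this last step. Tightness of $i^*$ gives only $s(\overline{X}_{i^*}+g)>C$, which is insufficient for Lemma~\ref{lem:greedy:charity}; I genuinely need the stronger bound $s(W_g+g)>C$ on the denser sub-bundle $W_g$, and bridging this gap is precisely what the density-order fact buys. The delicate bookkeeping is around the moment $i^*$'s bundle equals $W_g$: I must argue that $i^*$ is examined again after that moment (which holds because $i^*$ is the agent picked at the terminating iteration) and that at that examination either $g$ is passed over for a strictly less dense item or the algorithm stops, both of which force $s(W_g+g)>C$. A secondary point to state carefully is the reduction of every agent's charity-envy to that of $i^*$, which relies only on $i^*$ having the globally minimum bundle value.
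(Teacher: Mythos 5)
Your proposal is correct and follows essentially the same route as the paper's proof: an induction showing the minimum-value recipient is never envied by more than the item just added (giving exact EF1 among agents), and an application of Lemma~\ref{lem:greedy:charity} to the minimum-value agent at termination, using the fact that each agent receives items in non-increasing density order to establish $s(W_g+g)>C$ for every unallocated $g$. Your treatment of the density-ordering step is somewhat more explicit than the paper's, but the underlying argument is identical.
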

\begin{proof}
	To see that the allocation $\mathbf{X}$ computed by the algorithm is EF1, we first argue that it is EF1 among the agents.
	Apparently, the agents do not envy each other at the beginning of the algorithm, when everyone gets an empty bundle. 
	As the algorithm proceeds, it allocates an item $g$ to the agent $i$ with minimum value $v(X_i)$, which is obviously not envied by any other agent.
	Hence as long as the EF1-ness holds in the previous iteration, it remains to hold after $g$ is allocated to $i$: removing $g$ from the $X_i$ will remove any possible envy against $i$. 
	Thus, by induction, $\mathbf{X}$ is EF1 among the agents.
	
	It remains to show that no agent envies the charity by more than one item, which we prove using Lemma~\ref{lem:greedy:charity}.
	Without loss of generality, suppose that when the algorithm terminates, agent $1$ gets the bundle with smallest value.
	It suffices to argue that agent $1$ does not envy charity by more than one item, since any other agent has the same budget as agent $1$, and have a bundle at least as valuable as $X_1$.
	
	Fix an arbitrary $T \subseteq X_0$ with $s(T) \le B$. 
	For any $g \in T$, consider $W_g = \{j \in X_1 : \rho_j \ge \rho_g\}$.
	If $s(W_g + g) \leq B$, then $g$ should have been added into $X_1$ before any item in $X_1 \setminus W_g$ were added in, by the greedy allocation of the algorithm. 
	Since $g$ is not allocated to $X_1$, we have $s(W_g + g) > B$.
	Thus, by Lemma \ref{lem:greedy:charity}, there exists an item $j \in T$ such that $v(X_1) \ge v(T-j)$.
	Consequently, agent $1$ does not envy the charity by more than one item.
	Since Algorithm~\ref{alg:uniform-budget} allocates one item in each iteration, and each iteration finishes in polynomial time, the polynomial runtime is readily seen.
\end{proof}

\subsection{Two Agents}

The scenario with two agents is a typical special setting of interest in the literature of fair division, where many fairness criteria can often be guaranteed, such as EF1, EFX (envy-free up to {\em any} item) \cite{caragiannis2019unreasonable,plaut2020almost}, and MMS (maximin share fairness) \cite{budish2011combinatorial}. 
These fair allocations, especially EFX and MMS allocations, 
are often computed by an approach known as ``divide and choose'', which however can take exponential time to finish. 
A natural question is to investigate whether this approach can be applied to our problem with budget constraints.
We present Algorithm~\ref{alg:two-agents}, a divide-and-choose style algorithm that computes an exact EF1 allocation.
Interestingly, our algorithm runs in polynomial time.
The correctness of the algorithm is shown in Theorem \ref{thm:n=2}, the proof of which can be found in the supplementary material.

\begin{algorithm}[h]
	\small
	\caption{EF1 allocation for two agents.\label{alg:two-agents}}
	
	\text{{\bf Input:} An instance $I = (\bfs, \bfv; \mathbf{C})$ with $n=2$ and $B_1 \le B_2$.}\\
	
	$\mathbf{C}' \leftarrow (B_1, B_1)$;\\
	
	$(X'_1, X'_2)\leftarrow$ output of Alg.~\ref{alg:uniform-budget} on instance $(\bfs, \bfv; \mathbf{C}')$;\\
	
	$X_1 \leftarrow \arg\min_{X \in \{X'_1, X'_2\}} v(X)$. If there is a tie (in value), then let $X_1$ be the bundle with smaller size;\\
	
	$\tilde{s}(g) \leftarrow \infty$ for all $g \in X_1$;\\
	\text{$X_2 \leftarrow$ output of Alg.~\ref{alg:uniform-budget} on single-agent instance $(\tilde{\bfs}, \bfv; B_2)$;}\\
	
	%	\smallskip
	
	{\bf Output} allocation $\mathbf{X} = (X_1, X_2)$.
\end{algorithm}

\begin{theorem}
	\label{thm:n=2}
	Algorithm~\ref{alg:two-agents} computes an EF1 allocation in polynomial time when there are 2 agents.
\end{theorem}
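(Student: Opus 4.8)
The plan is to analyze Algorithm~\ref{alg:two-agents} as a faithful realization of the ``divide and choose'' paradigm, where the division step is carried out via the uniform-budget routine (Algorithm~\ref{alg:uniform-budget}) restricted to the \emph{smaller} budget $B_1$, and the choosing step lets the agent with the smaller-budget-constrained view take the less valuable piece. First I would establish the structural facts produced by the first call to Algorithm~\ref{alg:uniform-budget} on the equal-budget instance $\mathbf{C}' = (B_1,B_1)$. By Theorem~\ref{lemma:idenV+idenB}, the pair $(X'_1,X'_2)$ is a budget-feasible EF1 allocation with respect to the common budget $B_1$; in particular, both $s(X'_1) \le B_1$ and $s(X'_2)\le B_1$, and neither piece is envied by more than one item by an agent operating under budget $B_1$. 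Crucially, because the division used the \emph{smaller} budget, both resulting pieces are automatically feasible for agent $1$ \emph{and} for agent $2$ (since $B_1 \le B_2$), which is what makes the later assignment legal.

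Next I would verify the two envy directions separately. For agent~$1$ not envying agent~$2$: by construction $X_1$ is the less valuable (size-tie-broken) of the two pieces, so $v(X_1) \le v(X_2)$ in the value sense, but we must show EF1 against $X_2$ under budget $B_1$. Here I would invoke the EF1 guarantee of the initial uniform-budget allocation: for any $T \subseteq X_2$ with $s(T)\le B_1$ there is an item $e\in T$ with $v(X_1)\ge v(T-e)$. Since agent~$1$'s actual budget is exactly $B_1$, the sub-bundles she can consider are precisely those of size at most $B_1$, so the EF1 property transfers verbatim. The subtle point is that $X_2$ in the \emph{output} may differ from $X'_2$, because the final step reallocates agent~$2$'s bundle by rerunning Algorithm~\ref{alg:uniform-budget} on the single-agent instance where the items of $X_1$ are given infinite size $\tilde s(g)=\infty$ (forbidding them) and agent~$2$'s true budget $B_2$. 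I would therefore argue that this reoptimization can only \emph{weakly increase} $v(X_2)$ over the retained piece, so agent~$1$'s EF1 guarantee against the original piece is preserved (more value in the other bundle never helps agent~$1$ envy less, so I must instead anchor agent~1's non-envy on the fact that agent~2's final bundle is drawn greedily and agent~1 still cannot fit more than the $B_1$-bounded sub-bundle whose value she already dominates up to one item).

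For the second direction, agent~$2$ not envying agent~$1$: here agent~$2$ has the larger budget $B_2$, and her final bundle $X_2$ is computed greedily over all items \emph{except} those in $X_1$, using the single-agent uniform-budget routine. I would apply Lemma~\ref{lem:greedy:charity} with $X = X_2$, $Y \subseteq X_1$, and $B = B_2$: since $X_2$ is the greedy densest-first packing of the remaining items under budget $B_2$, any item $g \in X_1$ that could have fit (i.e.\ $s(W_g + g)\le B_2$ for $W_g = \{e\in X_2 : \rho_e \ge \rho_g\}$) would have been taken instead of some less dense item, so the hypothesis $s(W_g+g) > B_2$ of the lemma holds for every $g$ in any candidate sub-bundle $T \subseteq X_1$ with $s(T)\le B_2$. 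The lemma then yields an item $j\in T$ with $v(X_2)\ge v(T-j)$, which is exactly the EF1 condition for agent~$2$ against agent~$1$.

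The main obstacle I anticipate is the interaction between the two budgets in the reallocation step: one must argue carefully that rerunning the greedy routine for agent~$2$ under her true (larger) budget $B_2$ does not destroy the EF1 guarantee that agent~$1$ enjoyed against the \emph{original} piece $X'_{3-1}$ under the smaller budget $B_1$. The delicate case is when the size-tie-breaking rule (choosing the smaller-size piece for $X_1$ when values are equal) becomes essential — without it, agent~$1$ might retain a bundle whose size leaves room for agent~$2$'s enlarged bundle to induce more-than-one-item envy that is not covered by the original $B_1$-bounded argument. I would handle this by showing that the retained piece $X_1$, being of minimum value (and minimum size under ties), satisfies $s(X_1)\le B_1 \le B_2$ and that agent~1's envy toward $X_2$ is still bounded by comparing only the $B_1$-feasible sub-bundles of $X_2$, which never exceed in value what agent~1 already dominates up to one item by the first allocation's EF1 property; the enlargement of $X_2$ beyond its original $B_1$-feasible portion is irrelevant to agent~$1$ precisely because those extra items push any such sub-bundle's size past $B_1$. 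Finally, polynomial running time is immediate since the procedure makes three calls to the polynomial-time Algorithm~\ref{alg:uniform-budget} plus linear-time bookkeeping.
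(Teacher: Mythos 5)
Your high-level framing matches the algorithm, but both substantive envy arguments have gaps, and envy toward the final charity bundle is not addressed at all. For agent~$1$: the output bundle $X_2$ is \emph{not} $X'_2$ --- it is a fresh greedy packing of $M\setminus X_1$ under the larger budget $B_2$, so it can absorb items that Phase~1 left unallocated and drop items of $X'_2$. A sub-bundle $T$ of the final $X_2$ (or of the final $X_0$) with $s(T)\le B_1$ can therefore mix items of $X'_2$ with items of $X'_0$, and the Phase-1 EF1 guarantee does not ``transfer verbatim'': combining ``EF1 against $X'_2$'' with ``EF1 against $X'_0$'' only lets you remove one item from \emph{each} part, which is not EF1. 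You sense this (``the subtle point\ldots''), but the proposed fix --- that reoptimization weakly increases $v(X_2)$ --- points the wrong way (a richer $X_2$ makes agent~$1$'s envy \emph{worse}), and the following sentence merely restates the desired conclusion. The paper instead proves one stronger statement for \emph{any} $T\subseteq M\setminus X_1$ with $s(T)\le B_1$: take the densest item $j\in T\cap X'_0$, note that when Phase~1 failed to place $j$ the partial bundle $X''_2\subseteq X'_2$ already satisfied $s(X''_2+j)>B_1$ with every item at least as dense as $j$, hence $s(T-j)\le B_1-s_j<s(X''_2)$ and a density comparison gives $v(T-j)\le v(X''_2)\le v(X'_2)$. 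This single argument covers agent~$1$ against agent~$2$ and against the charity simultaneously; nothing in your proposal supplies it.

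For agent~$2$: the invocation of Lemma~\ref{lem:greedy:charity} with $Y\subseteq X_1$ and $B=B_2$ is invalid. The items of $X_1$ are given size $\tilde{s}(g)=\infty$ in the re-run, so they were never \emph{rejected for lack of room} --- they were forbidden outright. Hence the hypothesis $s(W_g+g)>B_2$ has no reason to hold for $g\in X_1$ (e.g., $X_1$ could contain a single tiny, very valuable item that would easily fit alongside the densest prefix of $X_2$), and the lemma only yields agent~$2$'s non-envy toward the \emph{charity}. For agent~$2$ versus agent~$1$, the fact actually needed is $v(X_2)\ge v(X'_2)$ (monotonicity of the greedy value in the budget), which the paper proves by tracking the first item $j^*$ at which the $B_2$-run diverges from the $B_1$-run and comparing $v(Y+j^*)$ with $v(X'_2)$ by density; combined with $v(X'_2)\ge v(X_1-j_{\text{last}})$ this closes the case. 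As written, your proposal establishes neither of the two nontrivial inequalities on which the proof rests.
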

\begin{proof}
	In the following, we refer to Line 2 - 4 as Phase-(1) and Line 5 -7 as Phase-(2) of Algorithm~\ref{alg:two-agents}.
	
	Let $X_0 = M\setminus (X_1\cup X_2)$ be the unallocated items.
	We first show that agent $1$ does not envy agent $2$ or the charity by more than one item.
	Indeed, we prove the following stronger statement: for any $T\subseteq X_2\cup X_0 = M\setminus X'_1$ with $s(T) \leq B_1$, there exists an item $j\in T$ such that
	\begin{equation*}
		v(X_1) \geq v(T - j).
	\end{equation*}
	
	Obviously, if $T\subseteq X'_2$, then the statement is true because
	\begin{equation*}
		v(X_1) = v(X'_1) \geq v(X'_2) \geq v(T).
	\end{equation*}
	
	Otherwise let $j$ be the item in $T\cap X'_0$ (where $X'_0 = M\setminus (X'_1\cup X'_2)$) with the maximum density and consider the moment in Phase-(1) when the algorithm tries to include item $j$ into the bundle $X'_2$.
	Let $X''_2 \subseteq X'_2$ be the items that are already allocated to the bundle $X'_2$ at this moment.
	It is possible that $X''_2 \neq X'_2$ because some smaller items may get allocated to $X'_2$ after item $j$.
	Since $j\in X'_0$ at the end of Phase-(1), item $j$ is not successfully included into bundle $X''_2$ at this moment, i.e., we have that $s(X''_2 + j) > B_1$.
	Thus we have
	\begin{equation*}
		s(T-j) \leq B_1 - s_j \leq s(X''_2).
	\end{equation*}
	
	Moreover, each item in $X''_2$ has density at least $\rho_j$; each item in $(T-j)\setminus X''_2$ has density at most $\rho_j$.
	Consequently
	\begin{equation*}
		v(T - j) \leq v(X''_2) \leq v(X'_2) \leq v(X_1).
	\end{equation*}
	
	Next we show that agent $2$ does not envy agent $1$ or the charity by more than one item.
	It is easy to see that agent $2$ does not envy charity because the allocation $X_2$ is returned by Algorithm~\ref{alg:uniform-budget}, which guarantees EF1-ness between $X_2$ and $X_0$ (see Theorem~\ref{thm:compute-EF1}).
	To show that agent $2$ does not envy agent $1$, it suffices to prove that $v(X_2) \geq v(X'_2)$ because
	\begin{equation*}
		v(X'_2) \geq v(X_1 - j_\text{last}),
	\end{equation*}
	where $j_\text{last} \in X_1 = X'_1$ is the last item assigned to bundle $X'_1$ by Algorithm~\ref{alg:uniform-budget} in Phase-(1).
	
	Observe that $X'_2$ can be produced by running Algorithm~\ref{alg:uniform-budget} with a single agent with budget $B_1$ on items $M\setminus X'_1$.
	Let $j^*\in X_2\setminus X'_2$ be the first item that is included into $X_2$ that is not from $X'_2$.
	If no such item exists then we have $X_2 = X'_2$ (since $B_2 \geq B_1$) and the $v(X_2)\geq v(X'_2)$ trivially holds.
	Note that right before the algorithm tries to allocate item $j^*$ to $X_2$ and $X'_2$, both bundles contain exactly the same set of items $Y$.	
	Since $j^*$ is included in $X_2$ but not in $X'_2$, we have
	\begin{equation*}
		B_1 < s(Y+j^*) \leq B_2,
	\end{equation*}
	which implies that $s(Y+j^*) > s(X'_2)$.
	Moreover, since items in $X'_2\setminus Y$ have density at most that of $j^*$, we have $v(Y+j^*) > v(X'_2)$, which implies $v(X_2) \geq v(Y+j^*) > v(X'_2)$.
\end{proof}

\section{Large Budget Setting}
\label{sec:Nash}

In this section we consider the case when the budgets of agents are sufficiently large compared with sizes of items, i.e., \emph{large budget case}~\cite{esa/FeldmanHKMS10,ec/DevanurJSW11,icalp/MolinaroR12,stoc/KesselheimTRV14,corr/WuLG20}.
We show that under the large budget setting, our polynomial-time algorithm (Algorithm~\ref{alg:compute-EF1}) presented in Section~\ref{sec:1/2-EF1} computes an allocation that is almost EF1.
We also show that an allocation that maximizes NSW is almost EF1.
Let $\kappa = \min_{i\in N, j\in M} ({B_i}/{s_j})$, so every item has size at most ${B_i}/{\kappa}$, for any agent $i\in N$.
We consider the case when $\kappa$ is large in this section.

\subsection{Computation of Almost EF1 Allocation}

Recall that in Section~\ref{sec:1/2-EF1}, we propose a polynomial-time algorithm (Algorithm~\ref{alg:compute-EF1}) that computes an allocation that is $1/2$-EF1.
In the following, we show that when $s_j \leq B_i/\kappa$ for every item $j\in M$ and agent $i\in N$, the same algorithm computes an allocation that is $(1-1/\kappa)$-EF1.

\begin{theorem}\label{thm:compute-almost-EF1}
	Algorithm~\ref{alg:compute-EF1} computes a $(1-1/\kappa)$-EF1 allocation in polynomial time.
\end{theorem}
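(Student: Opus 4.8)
The plan is to revisit the proof of Theorem~\ref{thm:compute-EF1} and sharpen every place where a factor of $1/2$ was lost, replacing it with a factor of $(1-1/\kappa)$ under the assumption that $s_j \le B_i/\kappa$ for all $i\in N$, $j\in M$. The feasibility, polynomial runtime, the two invariants, Lemma~\ref{lmm:invariants}, Lemma~\ref{lemma:feasible-config-tryfit}, and Lemma~\ref{lem:greedy:charity} are all independent of the approximation factor, so they carry over verbatim. Likewise the agent--charity part (Lemma~\ref{lmm:ef1-agent-charity}) already gives an exact guarantee $v(\overline{X}_j)\ge v(T-g)$, which trivially implies the $(1-1/\kappa)$ bound. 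Hence the whole burden falls on re-examining the agent--agent argument of Lemma~\ref{lmm:ef1-agents}.

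First I would trace through Lemma~\ref{lmm:ef1-agents} and note that Claim~\ref{clm:ef1-agents-A} and the ``finalized earlier'' case both already yield the \emph{exact} inequality $v(\overline{X}_i)\ge v(T-g)$, which is stronger than what we need. The only place where the constant $1/2$ genuinely entered is the final case where $\overline{X}_j$ is finalized after $\overline{X}_i$ and the witnessing item $g\notin T$. There the argument split $T=T_1\cup T_2$ and combined $v(\overline{X}_i)\ge v(T_1)$ with $v(\overline{X}_i)\ge v(T_2-e)$, summing to obtain $2v(\overline{X}_i)\ge v(T-e)$. The plan is to keep $v(\overline{X}_i)\ge v(T_1)$ as is, but to strengthen the bound on $v(T_2-e)$ using the large-budget hypothesis.

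The key improvement comes from the density estimate. In the original proof we had $v(\overline{X}_i)=v(X_i)\ge s(X_i)\cdot\rho(T_2-e) > (B_i-s(e))\cdot\rho(T_2-e)$, where $s(X_i)>B_i-s(e)$ because adding any $e\in T_2$ to $X_i$ would overflow the budget. Under $s(e)\le B_i/\kappa$ we now get $B_i-s(e)\ge B_i(1-1/\kappa)\ge s(T_2)(1-1/\kappa)$, since $s(T_2)\le s(T)\le B_i$. Therefore
\begin{equation*}
	v(\overline{X}_i) > (B_i-s(e))\cdot\rho(T_2-e) \ge (1-1/\kappa)\,s(T_2)\cdot\rho(T_2-e) \ge (1-1/\kappa)\,v(T_2),
\end{equation*}
where the last step uses that the items of $X_i$ are at least as dense as those of $T_2$, so $\rho(T_2-e)\ge$ the density needed to bound $v(T_2)=s(T_2)\cdot\rho(T_2)$ from above; I would verify carefully that dropping to $T_2-e$ versus $T_2$ is handled correctly, choosing $e$ to be (say) the densest or an appropriate item of $T_2$ so that $\rho(T_2-e)\cdot s(T_2)\ge v(T_2-e)$ and the chain closes. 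Combining this with $v(\overline{X}_i)\ge v(T_1)\ge v(T_1)$ and observing that $v(T_1)\le v(\overline{X}_i)$ while $v(T_2)\ge v(T-T_1)$, I would conclude $v(\overline{X}_i)\ge(1-1/\kappa)\,v(T-e)$ by taking the weaker of the two bounds rather than summing them.

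The main obstacle I anticipate is bookkeeping around the single removed item $e$: in the large-budget regime we want the full bundle $T_2$ (not $T_2-e$) to be dominated up to the factor $(1-1/\kappa)$, and we must ensure the $T_1$ part is also controlled by $(1-1/\kappa)$ rather than by $1$, so that a single clean factor applies to all of $T$ (or to $T-e$ for a single well-chosen $e$). The delicate point is reconciling the two sub-bundles $T_1\subseteq X_{j'}$ and $T_2\subseteq\overline{X}_j\setminus X_{j'}$ under one common removed item while preserving the density comparisons; I expect this to require care in selecting which item to designate as $e$ and in arguing that the budget-overflow condition $s(X_i+e)>B_i$ holds for the relevant $e\in T_2$, exactly as recorded in item~(i) of the original proof.
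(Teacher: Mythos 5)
There is a genuine gap, and it is not the bookkeeping issue you flag at the end but the combination step itself. In the case where $\overline{X}_j$ is finalized after $\overline{X}_i$ and $g\notin T$, you retain the decomposition $T=T_1\cup T_2$ and derive the two separate bounds $v(\overline{X}_i)\ge v(T_1)$ and $v(\overline{X}_i)\ge(1-1/\kappa)\,v(T_2)$. But ``taking the weaker of the two bounds rather than summing them'' does not yield $v(\overline{X}_i)\ge(1-1/\kappa)\,v(T-e)$: if $v(T_1)=v(T_2)=v(\overline{X}_i)$, both of your inequalities hold (up to the $\kappa$ factor) while $v(T)=2v(\overline{X}_i)$, so the best you can conclude from the pair is $v(\overline{X}_i)\ge\tfrac12(1-1/\kappa)\,v(T)$ --- strictly worse than the $1/2$ bound you started from. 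Any argument that bounds $T_1$ and $T_2$ each against the \emph{full} value $v(\overline{X}_i)$ and then recombines is structurally locked at a factor of $1/2$; the large-budget hypothesis only sharpens the $T_2$ half and cannot repair this.

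The paper's proof abandons the decomposition entirely. It compares $T$ against the whole bundle $X_{j'}$ in one step: since $X_{j'}$ is a density-ordered prefix of $\overline{X}_j$ and, by the size invariant, $s(X_{j'})>B_i\ge s(T)$, one gets $v(X_{j'})\ge v(T)$ for \emph{every} feasible $T\subseteq\overline{X}_j$ (not just $T\subseteq X_{j'}$). The only loss is then from $X_{j'}$ down to $X_{j'}-g$, where $g$ is the \emph{last} item added to $X_{j'}$; since $g$ has minimum density in $X_{j'}$, $s_g\le B_i/\kappa$, and $s(X_{j'})>B_i$, we have $v_g\le v(X_{j'})/\kappa$, whence $v(\overline{X}_i)\ge v(X_{j'}-g)\ge(1-1/\kappa)\,v(X_{j'})\ge(1-1/\kappa)\,v(T)$. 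This is the idea your proposal is missing: the large-budget assumption must be spent on the single boundary item of $X_{j'}$, not on the overflow item of $X_i$. Your treatment of the charity case and of the ``finalized earlier'' case is correct and matches the paper.
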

\begin{proof}
	By Lemma~\ref{lmm:ef1-agent-charity}, no agent envies the charity for more than one item. 
	Thus it suffices to show that for any two agents $i,j \in N$ and any $T\subseteq \overline{X}_j$ with $s(T)\leq B_i$, there exists an item $g\in T$ such that $v(\overline{X}_i)\geq (1-1/\kappa)\cdot v(T-g)$.
	Similar to the proof of Lemma~\ref{lmm:ef1-agents}, by Claim~\ref{clm:ef1-agents-A}, if bundle $\overline{X}_i$ is finalized after bundle $\overline{X}_j$, then agent $i$ does not envy agent $j$, and thus bundle $T$, for more than one item.
	Otherwise, i.e., $\overline{X}_j$ is finalized after $\overline{X}_i$, it suffices to consider the situation when $\overline{X}_i$ is actively finalized, by Lemma~\ref{lmm:ef1-active-finalize}.
	We show that in this case we have $v(\overline{X}_i) \geq (1-1/\kappa)\cdot v(T)$.
	
	Consider the moment right before $\overline{X}_i$ is finalized, and let $j'$ be the index of the bundle $\overline{X}_j$ at this moment.
	Let $g\in {X}_{j'}$ be the last item that is included in ${X}_{j'}$.
	We have $v(\overline{X}_i) \geq v({X}_{j'}-g)$.
	By the size invariant we have $s(X_{j'}) > B_i$, which implies that $v(X_{j'}-g) \geq (1-1/\kappa)\cdot v(X_{j'})$, because $g$ has the minimum density among items in $X_{j'}$, and $s_g \leq B_i/\kappa$.
	Finally, since both $X_{j'}$ and $T$ are subsets of $\overline{X}_j$, $s(X_{j'}) > B_i \geq s(T)$ and $X_{j'}$ contains the densest items of $\overline{X}_j$, we have $v(X_{j'}) > v(T)$.
	Putting everything together, we have
	\begin{equation*}
		v(\overline{X}_i) \geq v({X}_{j'}-g) \geq (1-1/\kappa)\cdot v(X_{j'})
		\geq (1-1/\kappa)\cdot v(T),
	\end{equation*}
	as we claimed.
\end{proof}

\subsection{Nash Social Welfare}

In fair division domain, one widely discussed topic is the fairness of the allocation that maximizes Nash Social Welfare (NSW), which is defined as the product of the agents' values, i.e., $\prod_{i \in N} v(X_i)$.
Under budget constraints, it is shown in \cite{corr/WuLG20} that such an allocation is 1/4-EF1.
They also show that even when the large budget setting, this allocation cannot guarantee an approximation better than 1/2. 
We improve this result by showing that when the agents have identical valuations for the items, the NSW maximizing allocation is almost EF1.
Note that when $m<n$ the optimal NSW is $0$, since some agents must have value $0$.
In this case, the NSW maximizing allocation allocates each item to a unique agent, which is trivially EF1.
Thus, the cases of our interest are the ones in which $m\geq n$.

\begin{theorem}\label{thm:Nash-almost-EF1}
	An allocation $\mathbf{X}^*$ that maximizes NSW is $(1-4\cdot \kappa^{-\frac{1}{4}})$-EF1.
\end{theorem}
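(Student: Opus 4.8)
The plan is to read off local optimality conditions from the NSW objective $\prod_{k\in N} v(X_k^*)$ and turn violations of (approximate) EF1 into profitable reallocations. First I would dispose of the charity exactly, with no approximation. Reassigning agent $i$'s bundle to any budget-feasible $Y\subseteq X_i^*\cup X_0^*$ and dumping the leftover to the charity changes only the single factor $v(X_i^*)$ in the product; hence in an NSW maximizer $X_i^*$ must be a value-maximal feasible subset of $X_i^*\cup X_0^*$. Every $T\subseteq X_0^*$ with $s(T)\le B_i$ is such a subset, so $v(X_i^*)\ge v(T)$, i.e.\ \emph{exact} envy-freeness toward the charity. This is the same mechanism underlying Lemma~\ref{lmm:ef1-agent-charity}, used here in its stronger (no ``$-e$'') form, so it suffices to treat envy between two ordinary agents $i,j\in N$.

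For the pair $(i,j)$ I would first note a useful reduction: perturbing only $X_i^*,X_j^*$ (sending any surplus to the charity) leaves all other factors fixed, so $(X_i^*,X_j^*)$ maximizes $v(X_i)\cdot v(X_j)$ over all feasible splits of the pool $X_i^*\cup X_j^*\cup X_0^*$ under budgets $B_i,B_j$. Fix an adversarial $T\subseteq X_j^*$ with $s(T)\le B_i$ and write $J=v(X_j^*)\ge v(T)=V$. If $i$ still has room for some $g\in T$, i.e.\ $s(X_i^*)+s_g\le B_i$, then moving $g$ from $j$ to $i$ gives $(v(X_i^*)+v_g)(J-v_g)\le v(X_i^*)\,J$, hence $v(X_i^*)\ge J-v_g\ge v(T-g)$: exact EF1. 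The same one-shot move also clears the sub-case where $X_i^*$ is uniformly at least as dense as $T$, since then $\rho(X_i^*)\ge\rho(T)$ together with $s(X_i^*)>B_i(1-1/\kappa)\ge s(T)(1-1/\kappa)$ already forces $v(X_i^*)\ge(1-1/\kappa)\,V\ge(1-4\kappa^{-1/4})\,v(T-e)$.

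The hard case is when $i$ is (almost) full---$s(X_i^*)>B_i(1-1/\kappa)$ because no item of $T$ fits---so only \emph{exchanges} are available. Here I would move a prefix $P$ of the densest items of $T$ into $i$ while evicting a prefix $R$ of the least dense items of $X_i^*$, choosing $s(R)\approx s(P)$ so that both bundles stay within budget, and routing the unavoidable size mismatch (at most one item, of size $\le B_i/\kappa$) through the charity. For the size-matched part, pairwise optimality gives $v(X_i^*)\ge J-(v(P)-v(R))$; since $P$ is denser than $R$ the net transfer $v(P)-v(R)$ is positive and the bound \emph{improves} as the exchanged chunk shrinks---letting the chunk tend to $0$ would yield $v(X_i^*)\ge J\ge V$ (approximate EF). Thus the entire loss is attributable to the granularity of a single feasible exchange.

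To quantify that loss I would introduce a cut-off $\theta$ separating the items of $T$ that are ``much denser'' than $X_i^*$'s evicted part from those that are only marginally denser. The marginal items can contribute envy bounded by a term proportional to $\theta$ (value sitting near the density boundary), while the much-denser items can be exchanged profitably except for the one mismatch item, whose value I would bound by a term of order $1/(\sqrt{\kappa}\,\theta)$---the extra $\sqrt{\kappa}$ appearing because, when $X_j^*$ is also nearly full, room for the exchange must be borrowed from the charity, which wastes value. Balancing $\theta+c/(\sqrt{\kappa}\,\theta)$ at $\theta\asymp\kappa^{-1/4}$ gives total loss $O(\kappa^{-1/4})$, and tracking constants yields the stated factor $1-4\kappa^{-1/4}$. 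The main obstacle is precisely this simultaneous-feasibility bottleneck: when both $X_i^*$ and $X_j^*$ are nearly full, a direct $i$--$j$ swap needs near-exact size matching and the only slack is the charity, so making the chunked exchange both feasible and almost lossless---and optimizing the cut-off---is the crux that produces the fourth-root rate.
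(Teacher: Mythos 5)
Your treatment of the easy cases is sound: exact envy-freeness toward the charity via single-factor optimality, exact EF1 when some $g\in T$ still fits into $X_i^*$ (the computation $(v(X_i^*)+v_g)(J-v_g)\le v(X_i^*)J \Rightarrow v(X_i^*)\ge J-v_g$ is correct), and the $(1-1/\kappa)$ bound when $X_i^*$ is uniformly at least as dense as $T$. The overall philosophy --- derive inequalities from local NSW-improving exchanges --- is also the paper's. But the theorem lives entirely in your ``hard case,'' and there the argument is not a proof. The step that fails is the claim that the single size-mismatch item in your prefix exchange has value $O\bigl(1/(\sqrt{\kappa}\,\theta)\bigr)$. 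The large-budget hypothesis bounds only \emph{sizes} ($s_j\le B_i/\kappa$), never values: a single item of $T$ may carry essentially all of $v(T)$ while having negligible size. Such an item cannot be classified by a \emph{density} cut-off $\theta$ (it can be arbitrarily dense or arbitrarily sparse), cannot be absorbed as a ``marginal'' contribution, and cannot be written off as a small mismatch. Your plan to ``let the chunk tend to $0$'' is exactly what indivisibility forbids, and you never supply the mechanism that controls the resulting granularity loss. Relatedly, whenever an exchange forces an eviction to the charity to restore feasibility, the two-factor optimality inequality picks up an additive loss term, and your asserted $\sqrt{\kappa}$ accounting for this is never derived.

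The paper closes precisely this gap with a \emph{value}-based (not density-based) dichotomy after first setting aside the most valuable item $j^*$ of $T$ (which serves as the ``$-e$'' item of EF1). If some $j\in T-j^*$ has $v_j\ge \kappa^{-3/4}\,v(T-j^*)$, that single item is moved to agent $1$ alone: writing $\delta=v_j/v(T-j^*)$, the poorer agent's factor grows by $1+\delta(1+\epsilon)$ while the richer agent's shrinks by only $1-\delta/(1+\delta)$, so the product gains roughly $\delta\epsilon$, which dominates the $O(1/\kappa)$ cost of evicting a minimum-density sliver from $X_1^*$ (Claim~\ref{claim:minimal-j}) exactly when $\epsilon\gtrsim\kappa^{-1/4}$. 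If instead every item of $T-j^*$ has value below that threshold, a minimal subset $Y$ achieving gain $\tfrac{\epsilon}{2}v(X_1^*)$ overshoots by at most one low-value item (Claim~\ref{claim:minimal-Y}), and the second-order profit $\approx\epsilon^2/4$ in the product absorbs both that overshoot and the $O(1/\kappa)$ eviction losses. The $\kappa^{-1/4}$ rate is the balance point of these two cases; your $\theta$-balancing heuristic lands on the same exponent but by an accounting that is not justified and, for the high-value-item obstruction, cannot be. To repair your write-up you would need to replace the density cut-off with a value cut-off and supply the single-item multiplicative-exchange argument for items above it.
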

\begin{proof}
	For ease of notation let $k = \kappa^{\frac{1}{4}}$.
	Thus every item has size at most ${B_i}/{k^4}$, for any agent $i\in N$.
	It suffices to consider the case when $k \geq 6$, as otherwise Theorem~\ref{thm:Nash-almost-EF1} is implied by the result of~\cite{corr/WuLG20}, who proved an $1/4$ approximation of EF1 for $\mathbf{X^*}$.
	Suppose $\mathbf{X}^*$ is not $(1-{4}/{k})$-EF1.
	Then there exists two agents, say agent $1$ and $2$, such that there is $T\subseteq X^*_2$ such that for all item $j\in T$,
	\begin{equation*}
		v(T-j) > \frac{1}{1-4/k}\cdot v(X^*_1)
		= (1 +\frac{4}{k-4})\cdot v(X^*_1).
	\end{equation*}
	Specifically, the inequality holds for the item $j^*\in T$ with the largest value $v_{j^*}$.
	Suppose $v(T-j^*) = (1+\epsilon)\cdot v(X^*_1)$, for some $\epsilon > \frac{4}{k-4}$.
	In the following, we consider two cases depending on the values of items in $T-j^*$.
	
	\paragraph{Case 1.}
	There exists $j\in T-j^*$ s.t. $v_j \geq \frac{1}{k^3}\cdot v(T-j^*)$.
	
	\smallskip
	
	Let $\Delta\subseteq X^*_1$ be a set of items with minimum value $v(\Delta)$ so that after removing $\Delta$ from $X^*_1$, 
	item $j$ can be included into $X^*_1$ without exceeding its budget, i.e., $s(X^*_1\setminus \Delta) \leq B_1-s_j$.
	
	\begin{claim}
		\label{claim:minimal-j}
		$v(X^*_1\setminus \Delta) \geq (1 - \frac{2}{k^4})\cdot v(X^*_1)$.
	\end{claim}	
	\begin{proof}
		Since every item has size at most $\frac{B_1}{k^4}$, if $s(X^*_1) \le (1 - \frac{1}{k^4})B_1$, the above inequality trivially holds with $\Delta = \emptyset$.
		If $s(X^*_1) > (1 - \frac{1}{k^4})B_1$, we can construct a subset $\Delta'$ by taking items with minimum density from $X^*_1$ until $s(X^*_1 \setminus \Delta) \leq B_1 - s_j$. 
		Since each item has size at most $B_1/k^4$, we must have $s(X^*_1 \setminus \Delta) > (1-2/k^4)\cdot B_1$ when we stop.
		Consequently we have
		\begin{equation*}
			v(X^*_1\setminus \Delta') \ge \frac{s(X^*_1\setminus \Delta')}{s(X^*_1)}\cdot v(X^*_1) \ge  (1 - \frac{2}{k^4})\cdot v(X^*_1).
		\end{equation*}
		Since $\Delta$ is the set of item with minimum value satisfying $s(X^*_1\setminus \Delta) \leq B_1 - s_j$, we have $v(X^*_1\setminus\Delta) \ge v(X^*_1\setminus \Delta') \ge  (1 - {2}/{k^4})\cdot v(X^*_1)$.
	\end{proof}
	
	Next we define a new allocation $\mathbf{X}$.
	
	Let $j\in X_2$ and $\Delta\subseteq X^*_1$ be as specified above.
	Define $X_1 = (X^*_1\setminus \Delta)+j$ and $X_2 = X^*_2 - j$. 
	Note that $X_1$ and $X_2$ satisfy the budget constraints.
	Denote 
	\[
	\delta = \frac{v_j}{v(T-j^*)} = \frac{v_j}{ (1+\epsilon)\cdot v(X^*_1)}\geq \frac{1}{k^3},
	\]
	and we have
	\begin{equation*}
		v(X_1) \geq \left( 1-\frac{2}{k^4}+ \delta(1+\epsilon) \right) v(X^*_1)
		> \left( 1 + \delta \right) v(X^*_1),
	\end{equation*}
	where the second inequality holds since 
	\begin{align*}
		\delta \cdot \epsilon - \frac{2}{k^4} > \frac{1}{k^3} \cdot \frac{4}{k - 4} - \frac{2}{k^4} > \frac{2}{k^4} > 0.
	\end{align*}
	On the other hand, since $T\subseteq X^*_2$ and $v_{j^*} \geq v_j$, we have
	\begin{align*}
		v(X_2) & \geq v(X^*_2) - \frac{v(X^*_2)}{v(T-j^*) + v_{j^*}}\cdot v_j \\
		& \geq v(X^*_2) - \frac{v_j}{v(T-j^*) + v_{j}}\cdot v(X^*_2) \\
		& = \left( 1 - \frac{\delta}{\delta+1} \right)\cdot v(X^*_2).
	\end{align*}
	
	Let $X_i = X^*_i$ for all $i \ge 3$ and $X_0 = X^*_0 \cup \Delta$.
	Then
	\begin{align*}
		\frac{\prod_{i\in N}v(X_i)}{\prod_{i\in N}v(X^*_i)} > \left( 1 + \delta \right)\cdot \left( 1 - \frac{\delta}{1+\delta} \right) =1,
	\end{align*}
	which contradicts the optimality of $\mathbf{X}^*$.
	
	\paragraph{Case 2.}
	For all $j\in T-j^*$, we have $v_j < \frac{1}{k^3}\cdot v(T-j^*)$.
	For any set $Y\subseteq T-j^*$, let $\Delta(Y) \subseteq X^*_1$ be the set of items in $X^*_1$ with $s(X^*_1\setminus \Delta(Y)) \leq B_1-s(Y)$ and minimum $v(\Delta(Y))$.
	That is, $\Delta(Y)$ is the bundle with smallest value such that after removing $\Delta(Y)$ from $X^*_1$, we can include $Y$ into $X^*_1$ without violating the budget constraint.
	
	Recall that $v(T-j^*)=(1+\epsilon)\cdot v(X^*_1)$.
	
	\begin{claim}\label{claim:minimal-Y}
		There exists $Y\subseteq T-j^*$ such that $Y \neq \emptyset$ and
		\begin{equation}
			v((X^*_1 \setminus \Delta(Y))\cup Y)\geq (1+\frac{\epsilon}{2})\cdot v(X^*_1).
			\label{eq:threshold-for-1}
		\end{equation}
		Moreover, for any $j\in Y$ we have
		\begin{equation*}
			v((X^*_1 \setminus \Delta(Y))\cup (Y-j)) < (1+\frac{\epsilon}{2})\cdot v(X^*_1).
		\end{equation*}
	\end{claim}
	\begin{proof}
		Observe that $Y \neq \emptyset$ does not satisfy Inequality~\eqref{eq:threshold-for-1}.
		On the other hand, for $Y = T-j^*$, we have
		\begin{equation*}
			v((X^*_1 \setminus \Delta(Y))\cup Y)\geq v(Y) = (1+\epsilon)\cdot v(X^*_1),
		\end{equation*}
		which satisfies Inequality~\eqref{eq:threshold-for-1}.
		
		Let $Y$ be the minimal subset of $T-j^*$ that satisfies~\eqref{eq:threshold-for-1}.
		Then $Y$ also satisfies the second condition.
	\end{proof}
	
	Next we define a new allocation $\mathbf{X}$.
	Let $Y\subseteq T-j^*$ be as specified in Claim~\ref{claim:minimal-Y}.
	Define $X_1 = (X^*_1\setminus \Delta(Y))\cup Y$ and we have 
	\[
	v(X_1) \geq (1+\frac{\epsilon}{2})\cdot v(X^*_1).
	\]
	
	Let $X' = (X^*_2\setminus Y)\cup \Delta(Y)$, which might have size larger than $B_2$.
	Note that by Claim~\ref{claim:minimal-Y}, for any item $j\in Y$, we have
	\begin{align}
		v(Y) - v(\Delta(Y)) & = v(Y-j)-v(\Delta(Y)) + v_j
		< \frac{\epsilon}{2}\cdot v(X^*_1) + \frac{1}{k^3}\cdot v(T-j^*) \nonumber \\
		& = \left( \frac{\epsilon}{2(1+\epsilon)} + \frac{1}{k^3} \right)\cdot v(T-j^*). \label{eq:Y-DY}
	\end{align}
	
	Since the size of $X'$ might be larger than $B_2$, we cannot assign all these items to agent $2$.
	However, observe that $s(\Delta(Y))\leq s(Y)+{B_2}/{k^4}$, as otherwise we can remove some item (which has size at most ${B_2}/{k^4}$) from $\Delta(Y)$, which contradicts the definition of $\Delta(Y)$.
	Hence we have $s(X')\leq B_2 + \frac{B_2}{k^4}$.
	Define $X_2$ to be the set obtained from $X'$ by removing items with minimum density until the size becomes at most $B_2$.
	Similar to Claim \ref{claim:minimal-j}, we have 
	\begin{align*}
		v(X_2) &\geq (1-\frac{2}{k^4})\cdot v(X')
		\geq (1-\frac{2}{k^4})\cdot v(X^*_2) - (v(Y) - v(\Delta(Y))) \\
		& \geq \left( 1 - \frac{2}{k^4} - \frac{\epsilon}{2(1+\epsilon)} - \frac{1}{k^3} \right)\cdot v(X^*_2)
		= \left( 1 - \frac{\epsilon}{2} + \frac{\epsilon^2}{2(1+\epsilon)} - \frac{2}{k^4} - \frac{1}{k^3} \right)\cdot v(X^*_2) \\
		& \geq \left( 1 - \frac{\epsilon}{2} + \frac{0.9\epsilon^2}{2(1+\epsilon)}\right)\cdot v(X^*_2),
	\end{align*}
	where the third inequality is by Equation (\ref{eq:Y-DY}) and the last inequality uses $\epsilon > \frac{4}{k-4}$ and $k\geq 6$.
	
	Finally, we have
	\begin{align*}
		\frac{\prod_{i\in N}v(X_i)}{\prod_{i\in N}v(X^*_i)} & > \left( 1 + \frac{\epsilon}{2} \right)\cdot \left( 1 - \frac{\epsilon}{2} + \frac{0.9\epsilon^2}{2(1+\epsilon)} \right) \\
		& = 1 - \frac{\epsilon^2}{4} + \frac{0.9\epsilon^2}{2(1+\epsilon)}(1+\frac{\epsilon}{2}) > 1,
	\end{align*}
	which is also a contradiction.
\end{proof}

For the setting without budget constraints, we have, equivalently, $B_i = \infty$ for all $i\in N$ and hence $\kappa = \infty$.
Thus, Theorem~\ref{thm:Nash-almost-EF1} implies that for the case without budget constraints, the allocation $\mathbf{X}^*$ is EF1, which coincides with the result of Caragiannis et al.~\cite{caragiannis2019unreasonable}.

\section{Conclusion and Future Directions}
In this work, we initiated the study of fair division under budget constraints.
We designed a novel polynomial-time algorithm that always returns a 1/2-EF1 allocation.
When agents have identical budget or there are only two agents, we show that an exact EF1 allocation can be computed in polynomial time.
A direct open problem is the existence of exact EF1 allocations for the general case or more general when agents have non-identical valuations.
There are also many other further directions. 
For example, in the current work, we restrict agents to not envy the charity to ensure efficiency.
It is interesting to consider alternative efficiency criteria such as Pareto optimality (PO). 
Finally, we believe it would be interesting to investigate the budget-feasible fair allocation problem under other fairness notions, e.g., maximin share fairness.

\bibliographystyle{abbrv}% the recommended bibstyle
\bibliography{fair-packing}

\end{document}